\DeclareMathAlphabet{\mathcal}{OMS}{cmsy}{m}{n}
\newcommand{\eop}{\hspace*{\fill}$\Box$}
\def\qed{\eop}
\newtheorem{definition}{Definition}[section]
\newtheorem{lemma}[definition]{Lemma}
\newtheorem{theorem}[definition]{Theorem}
{\theorembodyfont{\rmfamily}
  \newtheorem{example}[definition]{\it Example}
  \newtheorem{proof}{\it Proof.}
}
\def\cR{\mathcal{R}}
\def\cS{\mathcal{S}}
\def\cV{\mathcal{V}}
\newcommand{\funtype}{\Rightarrow}
\def\cI{\mathcal{I}}
\newcommand{\cJ}{\mathcal{J}}
\newcommand{\Interpret}[1]{\llbracket #1 \rrbracket} 
\def\cC{\mathcal{C}}
\def\cD{\mathcal{D}}
\def\Constrained#1{[\,#1\,]}
\def\Hole{\Box}
\newcommand{\FVar}{{\mathcal{V}\mathit{ar}}}
\newcommand{\Dom}{{\mathcal{D}\mathit{om}}}
\newcommand{\cRcalc}{\cR_{\mathtt{calc}}}
\newcommand{\Sigmaterms}{\Sigma_{\mathit{terms}}}
\newcommand{\Sigmalogic}{\Sigma_{\mathit{theory}}}
\newcommand{\Sigmatheory}{\Sigma_{\mathit{theory}}}
\newcommand{\Sigmacore}{\Sigma_{\mathit{theory}}^{\mathit{core}}}
\newcommand{\Sigmaint}{\Sigma_{\mathit{theory}}^{\INT}}
\def\Val{{\mathcal{V}\mathit{al}}}
\newcommand{\LVar}{{\mathcal{LV}\mathit{ar}}}
\def\sort#1{\mathit{#1}}
\newcommand{\symb}[1]{\mathsf{#1}}
\newcommand{\var}[1]{\mathit{#1}}
\newcommand{\Int}{\mathbb{Z}}
\newcommand{\Bool}{\mathbb{B}}
\newcommand{\INT}{\mathit{int}}
\newcommand{\BOOL}{\mathit{bool}}
\def\Rule#1#2{#1 \to #2}
\def\CRule#1#2#3{#1 \to #2\ \Constrained{#3}}
\newcommand{\SIMPo}{\textsf{SIMP}}
\newcommand{\SIMP}{{\textsf{SIMP}${}^+$}}
\newcommand{\IntVarDecl}[1]{\mathtt{int}~#1}
\newcommand{\IntFunDecl}[2]{\mathtt{int}~#1(#2)}
\newcommand{\lrang}[1]{\mathit{#1}} 
\newcommand{\Assign}[2]{#1~=~#2}
\newcommand{\Return}[1]{\mathtt{return}~#1}
\newcommand{\If}[3]{\mathtt{if}(\,#1\,)\{\,#2\,\}\mathtt{else}\{\,#3\,\}}
\newcommand{\Skip}{\epsilon}
\newcommand{\While}[2]{\mathtt{while}(\,#1\,)\{\,#2\,\}}
\newcommand{\True}{\symb{true}} 
\newcommand{\False}{\symb{false}} 
\newcommand{\EQ}{\mathrel{\mbox{==}}}
\newcommand{\FCall}[2]{#1(#2)}
\newcommand{\GVar}{{\mathcal{GV}\mathit{ar}}}
\newcommand{\toExpr}{\Downarrow_{\mathrm{calc}}}
\newcommand{\Config}[3]{\langle #1,\, #2,\, #3 \rangle}
\newcommand{\Update}[3]{#1[ #2 \mapsto #3 ]}
\newcommand{\Inference}[2]{\displaystyle \frac{~~ #1 ~~}{~~ #2 ~~}}
\newcommand{\Convert}{\mathit{conv}}
\newcommand{\Trans}[1][P]{\mathit{aux}_{#1}}
\def\Xvec{\vec{x}}
\newcommand{\Yvec}[1][]{\overrightarrow{y_{#1}}}
\newcommand{\Zvec}[1][]{\overrightarrow{z_{#1}}}
\newcommand{\Evec}[1][]{\overrightarrow{e'_{#1}}}
\def\PrgrmA{P_1}
\def\cRsumiter{\cR_1}
\def\cRseq{\cR_2}
\def\cRsumrec{\cR_3}
\def\cRsumrecstack{\cR_4}
\def\SumOneA{\symb{u_1}}
\def\Ga{\symb{u_2}}
\def\Gb{\symb{u_3}}
\def\Gc{\symb{u_4}}
\def\Gnum{\symb{u_2'}}
\def\SumA{\symb{u_1}}
\def\SumB{\symb{u_2}}
\def\SumC{\symb{u_3}}
\def\SumD{\symb{u_4}}
\def\SumE{\symb{u_5}}
\def\SumF{\symb{u_6}}
\def\SumG{\symb{u_7}}
\def\SumH{\symb{u_8}}
\def\SumI{\symb{u_9}}
\def\MainA{\symb{u_{10}}}
\def\MainB{\symb{u_{11}}}
\def\MainC{\symb{u_{12}}}
\title{On Transforming Functions Accessing Global Variables into Logically Constrained Term Rewriting Systems%
\thanks{%
This work was partially supported by DENSO Corporation, NSITEXE, Inc., and JSPS KAKENHI Grant Number JP18K11160.}
}
\author{Yoshiaki Kanazawa 
\institute{Graduate School of Informatics\\
Nagoya University\\
Nagoya, Japan}
\email{yoshiaki@trs.css.i.nagoya-u.ac.jp}
\and
Naoki Nishida
\institute{Graduate School of Informatics\\
Nagoya University\\
Nagoya, Japan}
\email{nishida@i.nagoya-u.ac.jp}
}
\begin{document}
\maketitle

\begin{abstract}
In this paper, we show a new approach to transformations of an imperative program with function calls and global variables into a logically constrained term rewriting system. The resulting system represents transitions of the whole execution environment with a call stack. More precisely, we prepare a function symbol for the whole environment, which stores values for global variables and a call stack as its arguments. For a function call, we prepare rewrite rules to push the frame to the stack and to pop it after the execution. Any running frame is located at the top of the stack, and statements accessing global variables are represented by rewrite rules for the environment symbol. We show a precise transformation based on the approach and prove its correctness.
\end{abstract}

\section{Introduction}
\label{sec:intro}

Recently, analyses of imperative programs (written in C, Java Bytecode, etc.) via transformations into term rewriting systems have been investigated \cite{FK09,FKS11,FNSKS08b,Otto10}.
In particular, \emph{constrained rewriting systems} are popular for these transformations, since logical constraints used for modeling the control flow can be separated from terms expressing intermediate states~\cite{FK09,FKS11,FNSKS08b,NNKSS11,SNSSK09}.
To capture the existing approaches for constrained rewriting in one setting, the framework of a \emph{logically constrained term rewriting system} (an LCTRS, for short) has been proposed~\cite{KN13frocos}. 
Transformations of C programs with integers, characters, arrays of integers, global variables, and so on into LCTRSs have been discussed in~\cite{FKN17tocl}.

A basic idea of transforming functions defined in simple imperative programs over the integers, so-called \emph{while} programs, is to represent transitions of parameters and local variables as rewrite rules with auxiliary function symbols.
The resulting rewriting system can be considered a \emph{transition system} w.r.t.\ parameters and local variables.
Consider the function $\mathtt{sum1}$ in Figure~\ref{fig:sum1}, which is written in the C language.
The function \texttt{sum1} computes the summation from $0$ to a given non-negative integer $x$.
The execution of the body of this function can be considered a transition of values for $x$, $i$, and $z$, respectively.
For example, we have the following transition for \verb+sum1(3)+:
\[
	(3,0,0)
	\mathrel{\to}
	(3,0,1)
	\mathrel{\to}
	(3,1,1)
	\mathrel{\to}
	(3,1,3)
	\mathrel{\to}
	(3,2,3)
	\mathrel{\to}
	(3,2,6)
	\mathrel{\to}
	(3,3,6)
	\mathrel{\to}
    (3,3,6)
\]
This transition for the execution of the function $\mathtt{sum1}$ can be modeled by an LCTRS as follows~\cite{FNSKS08b,FKN17tocl}:
\[
	\cRsumiter =
 \left\{
 \begin{array}{r@{\>}c@{\>}lr@{\,}c@{\,}l}
 	\Rule{\symb{sum1}(x) &}{& \SumOneA(x,\symb{0},\symb{0})}, \\
 	\CRule{\SumOneA(x,i,z) &}{& \SumOneA(x,i+\symb{1},z+i+\symb{1}) &}{& i < x &}, \\
 	\CRule{\SumOneA(x,i,z) &}{& \symb{return}(z) &}{& \neg (i < x) &} \\
 \end{array}
 \right\}
\]
Note that the auxiliary function symbol $\SumOneA$ can be considered locations stored in the \emph{program counter}.
The transformed LCTRS is useful to verify the original program~\cite{FKN17tocl}.
For example, the theorem proving method based on \emph{rewriting induction}~\cite{Red90} can automatically prove that $\forall n \in \Int.\ \symb{sum1}(n) = \frac{n(n+1)}{2}$, i.e., correctness of the C program~\cite{SNSSK09,KN15lpar,FKN17tocl}.

\begin{figure}[tb]
\begin{verbatim}
  int sum1(int x){
    int i = 0;  
    int z = 0;
    for( i = 0 ; i < x ; i = i + 1 ){
      z = z + i + 1;
    }
    return z;
  }
\end{verbatim}
\caption{a C program defining a function to compute the summation from $0$ to $x$.}
\label{fig:sum1}
\end{figure}

A function call is added as an extra argument of the auxiliary symbol that corresponds to the statement of the call.
Let us consider the following function in addition to $\mathtt{sum1}$ in Figure~\ref{fig:sum1}:
\begin{verbatim}
    int g(int x){
      int z = 0;
      z = sum1(x);
      return x * z;
    }
\end{verbatim}
This function is transformed into the following rules:
\[
 \{~
 	\Rule{\symb{g}(x)}{\Ga(x,\symb{0})},
~~
 	\Rule{\Ga(x,z)}{\Gb(x,z,\symb{sum1}(x))},
~~
 	\Rule{\Gb(x,z,\symb{return}(y))}{\Gc(x,y)},
~~
 	\Rule{\Gc(x,z)}{\symb{return}(x \times z)}
 ~\}
\]
The auxiliary function symbol $\Ga$ calls $\symb{sum1}$ in the third argument of $\Gb$ by means of the rule for $\Ga$.

To deal with a global variable under sequential execution, it is enough to pass a value stored in the global variable to a function call as an extra argument and to receive from the called function a value of the global variable that may be updated in executing the function call, restoring the value in the global variable.
Let us add a global variable counting the total number of function calls to the above program as in Figure~\ref{fig:sum1-g}.
This program is transformed into the following LCTRS~\cite{FKN17tocl}:
\[
	\cRseq =
 \left\{
 \begin{array}{r@{\>}c@{\>}lr@{\,}c@{\,}l}
 	\Rule{\symb{sum1}(x,\var{num}) &}{& \SumOneA(x,\symb{0},\symb{0},\var{num}+\symb{1})}, \\
 	\CRule{\SumOneA(x,i,z,\var{num}) &}{& \SumOneA(x,i+\symb{1},z+i+\symb{1},\var{num}) &}{& i < x &}, \\
 	\CRule{\SumOneA(x,i,z,\var{num}) &}{& \symb{return}(z,\var{num}) &}{& \neg (i < x) &}, \\[5pt]
 	\Rule{\symb{g}(x,\var{num}) &}{& \Ga(x,\var{num},\symb{0})}, \\
 	\Rule{\Ga(x,\var{num},z) &}{& \Gnum(x,\var{num}+\symb{1},z)}, \\
 	\Rule{\Gnum(x,\var{num},z) &}{& \Gb(x,\var{num},z,\symb{sum1}(x,\var{num}))}, \\
 	\Rule{\Gb(x,\var{num_{old}},z,\symb{return}(y,\var{num_{new}})) &}{& \Gc(x,\var{num_{new}},y)},\\
 	\Rule{\Gc(x,\var{num},z) &}{& \symb{return}(x \times z,\var{num})} \\
 \end{array}
 \right\}
\]
The above approach to transformations of function calls is very naive but not general.
For example, to model parallel execution, a value stored in a global variable does not have to be passed to a particular function or a process because another function or process may access the global variable.

\begin{figure}[t]
%
%
\begin{multicols}{2}
\begin{verbatim}
  int num = 0;

  int sum1(int x){
    num = num + 1;
    int i = 0;
    int z = 0;
    for( i = 0 ; i < x ; i = i + 1 ){
      z = z + i + 1;
    }
    return z;
  }
\end{verbatim}
\columnbreak
\begin{verbatim}    
  int g(int x){
    int z = 0;
    num = num + 1;
    z = sum1(x);
    return x * z;
  }
\end{verbatim}
\end{multicols}
\vspace{-5pt}
\caption{a C program obtained by adding the definition of \texttt{g} into the program for \texttt{sum}.}
\label{fig:sum1-g}
\end{figure}

In this paper, 
we show another approach to transformations of imperative programs with function calls and global variables into LCTRSs.
Our target languages are call-by-value imperative languages such as C. 
For this reason, we use a small subclass of C programs over the integers as fundamental imperative programs.
We show a precise transformation along the approach and prove its correctness.

Our idea of the treatment for global variables in calling functions is to prepare a new symbol to represent the whole environment for execution.
Values of global variables are stored in arguments of the new symbol, and transitions accessing global variables are represented as transitions of the environment.
In reduction sequences of LCTRSs obtained by the original transformation, positions of function calls are not unique, and thus, we may need (possibly infinitely) many rules for a transition related to a global variable.
To solve this problem, we prepare a so-called \emph{call stack}, and transform programs into LCTRSs that specify statements as rewrite rules for not only user-defined functions but also the introduced symbol of the environment.
In calling a function, a frame of the called function is pushed to the stack, and popped from the stack when the execution halts successfully.
This implies that any running frame is located at the top of the stack, i.e., positions of function calls are unique.
We transform statements not accessing global variables into rewrite rules for called functions as well as the previous transformation, and transform statements accessing global variables into rewrite rules for the introduced symbol for the environment. 

This paper is organized as follows.
In Section~\ref{sec:preliminaries}, we recall LCTRSs and a small imperative language.
In Section~\ref{sec:new-transformation}, using an example, we show a new approach to transformations of imperative programs into LCTRSs.
In Section~\ref{sec:formalization}, we precisely define a transformation and show its correctness. 
In Section~\ref{sec:conclusion}, we 
describe a future direction of this research.

\section{Preliminaries}
\label{sec:preliminaries}

In this section, we recall 
LCTRSs, following the definitions in~\cite{KN13frocos,FKN17tocl}.
We also recall a small imperative language {\SIMP} with global variables and function calls. 
Familiarity with basic notions on term rewriting~\cite{BN98,Ohl02} is assumed.

\subsection{Logically Constrained Term Rewriting Systems}


Let $\cS$ be a set of \emph{sorts} and $\cV$ a countably infinite set of \emph{variables}, each of which is equipped with a sort.
A \emph{signature} $\Sigma$ is a set, disjoint from $\cV$, of \emph{function symbols} $f$, each of which is equipped with a \emph{sort declaration} $\iota_1 \times \cdots \times \iota_n \funtype \iota$ where $\iota_1,\ldots,\iota_n,\iota \in \cS$.
For readability, we often write $\iota$ instead of
$\iota_1 \times \cdots \times \iota_n  \funtype \iota$ if $n=0$.
We denote the set of well-sorted \emph{terms}  
over $\Sigma$ and $\cV$
by $T(\Sigma,\cV)$.
In the rest of this section, we fix $\cS$, $\Sigma$, and $\cV$.
The set of variables occurring in $s_1,\ldots,s_n$ is denoted by $\FVar(s_1,\ldots,s_n)$.
Given a term $s$ and a \emph{position} $p$ (a sequence of positive integers) of $s$, $s|_p$ denotes the subterm of $s$ at position $p$,
and
$s[t]_p$ denotes $s$ with the subterm at position $p$ replaced by $t$.
A \emph{context} $C[~]$ is a term containing one \emph{hole} $\Hole_\iota : \iota$.
For a term $s : \iota$, $C[s]$ denotes the term obtained from $C[\,]$ by replacing $\Hole_\iota$ by $s$.

A \emph{substitution} $\gamma$ is a sort-preserving total mapping from $\cV$ to $T(\Sigma,\cV)$, and naturally extended for a mapping from $T(\Sigma,\cV)$ to $T(\Sigma,\cV)$:
the result $s\gamma$ of applying a substitution $\gamma$ to a term $s$ is $s$ with all occurrences of a variable $x$ replaced by $\gamma(x)$.
The \emph{domain} $\Dom(\gamma)$ of $\gamma$ is the set of variables $x$ with $\gamma(x) \neq x$.
The notation $\{ x_1 \mapsto s_1, \ldots, x_k \mapsto s_k \}$ denotes a substitution $\gamma$ with $\gamma(x_i) = s_i$ for $1 \leq i \leq n$, and $\gamma(y) = y$ for $y \notin \{x_1, \ldots,x_n\}$.
%


To define LCTRSs, we consider different kinds of symbols and terms:
(1) two signatures $\Sigmaterms$ and $\Sigmalogic$ such that $\Sigma = \Sigmaterms \cup \Sigmalogic$,
(2) a mapping $\cI$ which assigns to each sort $\iota$ occurring in $\Sigmatheory$ a set $\cI_\iota$,
(3) a mapping $\cJ$ which assigns to each $f : \iota_1 \times \cdots \times \iota_n \funtype \iota \in \Sigmalogic$ a function in $\cI_{\iota_1} \times \cdots \times \cI_{\iota_n} \funtype \cI_\iota$,
and
(4) a set $\Val_\iota \subseteq \Sigmalogic$ of \emph{values}---function symbols $a : \iota$ such that $\cJ$ gives a bijective mapping from $\Val_\iota$ to $\cI_\iota$---for each sort $\iota$ occurring in $\Sigmatheory$.
We require that $\Sigmaterms \cap \Sigmalogic \subseteq \Val = \bigcup_{\iota \in \cS} \Val_\iota$.
The sorts occurring in $\Sigmalogic$ are called \emph{theory sorts}, and the symbols \emph{theory symbols}.
Symbols in ${\Sigmalogic} \setminus {\Val}$ are \emph{calculation symbols}.
A term in $T(\Sigmalogic,\cV)$ is called a \emph{theory term}.
For ground theory terms, we define the interpretation as $\Interpret{f(s_1,\ldots,s_n)} = \cJ(f)(\Interpret{s_1},\ldots,\Interpret{s_n})$.
For every ground theory term $s$, there is a unique value $c$ such that $\Interpret{s} = \Interpret{c}$. 
We use infix notation for theory and calculation symbols.

A \emph{constraint} is a theory term $\varphi$ of some sort $\BOOL$ with $\cI_\BOOL = \Bool = \{ \top,\bot \}$, the set of \emph{booleans}.
A constraint $\varphi$ is \emph{valid} if $\Interpret{\varphi\gamma} = \top$ for all substitutions $\gamma$ which map $\FVar(\varphi)$ to values, and \emph{satisfiable} if $\Interpret{\varphi\gamma} = \top$ for some such substitution.
A substitution $\gamma$ \emph{respects} $\varphi$ if $\gamma(x)$ is a value for all $x \in \FVar(\varphi)$ and $\Interpret{\varphi\gamma} = \top$.
We typically choose a theory signature with $\Sigmalogic \supseteq \Sigmacore$, where $\Sigmacore$ contains
$\symb{true},\symb{false} : \BOOL$, $\wedge, \vee, \implies : \BOOL \times \BOOL \funtype \BOOL$, $\neg: \BOOL \funtype \BOOL$, and, for all theory sorts $\iota$, symbols $=_\iota, \neq_\iota : \iota \times \iota \funtype \BOOL$, and an evaluation function $\cJ$ that interprets these symbols as expected. 
We omit the sort subscripts from $=$ and $\neq$ when they are clear from context.

The standard integer signature $\Sigmaint$ is $\Sigmacore \cup \{ +, -,*,\symb{exp},\symb{div}, \symb{mod} : \INT \times \INT \funtype \INT \}\cup \{ {\geq}, {>} : \INT \times \INT \funtype \BOOL \} \cup \{ \symb{n} : \INT \mid n \in \Int \}$ with values $\symb{true}$, $\symb{false}$, and $\symb{n}$ for all integers $n \in \Int$.
Thus, we use $\symb{n}$ (in \textsf{sans-serif} font) as the function symbol for $n \in \Int$ (in $\mathit{math}$ font).
We define $\cJ$ in the natural way, except: since all $\cJ(f)$ must be total functions, we set $\cJ(\symb{div})(n,0) = \cJ(\symb{mod})(n,0) = \cJ(\symb{exp})(n,k) = 0$ for all $n$ and all $k < 0$. 
When constructing LCTRSs from, e.g., \emph{while} programs, we can add explicit error checks for, e.g., ``division by zero'', to constraints (cf.~\cite{FKN17tocl}).



A \emph{constrained rewrite rule} is a triple $\CRule{\ell}{r}{\varphi}$ such that $\ell$ and $r$ are terms of the same sort, $\varphi$ is a constraint, and $\ell$ has the form $f(\ell_1,\dots,\ell_n)$ and contains at least one symbol in $\Sigmaterms \setminus \Sigmatheory$ (i.e., $\ell$ is not a theory term).
If $\varphi = \symb{true}$ with $\cJ(\symb{true}) = \top$, we may write $\Rule{\ell}{r}$.
We define $\LVar(\CRule{\ell}{r}{\varphi})$ as $\FVar(\varphi) \cup (\FVar(r) \setminus \FVar(\ell))$.
We say that a substitution $\gamma$ \emph{respects} $\CRule{\ell}{r}{\varphi}$ if $\gamma(x) \in \Val$ for all $x \in \LVar(\CRule{\ell}{r}{\varphi})$, and $\Interpret{\varphi\gamma} = \top$.
Note that it is allowed to have $\FVar(r) \not\subseteq \FVar(\ell)$, but fresh variables in the right-hand side may only be instantiated with \emph{values}.
%
Given a set $\cR$ of constrained rewrite rules, we let $\cRcalc$ be the set $\{ \CRule{f(x_1,\ldots,x_n)}{y}{y = f(x_1,\ldots,x_n)} \mid f : \iota_1 \times \cdots \times \iota_n \funtype \iota \in {\Sigmalogic} \setminus {\Val} \}$.
We usually call the elements of $\cRcalc$ constrained rewrite rules (or \emph{calculation rules}) even though their left-hand side is a theory term.
The \emph{rewrite relation} $\to_{\cR}$ is a binary relation on terms, defined by:
$s[\ell\gamma]_p \mathrel{\to_\cR} s[r\gamma]_p$ if 
$\CRule{\ell}{r}{\varphi} \in \cR \cup \cRcalc$ and $\gamma$ respects $\CRule{\ell}{r}{\varphi}$.
We may say that the reduction occurs at position $p$.
A reduction step with $\cRcalc$ is called a \emph{calculation}.

Now we define a \emph{logically constrained term rewriting system} (an LCTRS, for short) as the abstract rewriting system $(T(\Sigma,\cV),\to_\cR)$ which is simply written by $\cR$.
An LCTRS is usually given by supplying $\Sigma$, $\cR$, and an informal description of $\cI$ and $\cJ$ if these are not clear from context.
An LCTRS $\cR$ is said to be \emph{left-linear} if for every rule in $\cR$, the left-hand side is linear.
$\cR$ is said to be \emph{non-overlapping} if
  for every term $s$ and rule $\CRule{\ell}{r}{\varphi}$ such that $s$ reduces with $\CRule{\ell}{r}{\varphi}$ at the root
  position: (a) there are no other rules $\CRule{\ell'}{r'}{\varphi'}$ such that $s$ reduces
  with $\CRule{\ell'}{r'}{\varphi'}$ at the root position, and (b) if $s$ reduces with any
  rule at a non-root position $q$, then $q$ is not a position of
  $\ell$.
$\cR$ is said to be \emph{orthogonal} if $\cR$ is left-linear and non-overlapping.
For $\CRule{f(\ell_1,\ldots,\ell_n)}{r}{\varphi} \in \cR$, we call $f$ a \emph{defined symbol} of $\cR$, 
and non-defined elements of $\Sigmaterms$ and all values are called \emph{constructors} of $\cR$.
Let $\cD_\cR$ be the set of all defined symbols and $\cC_\cR$ the set of constructors.
A term in $T(\cC_\cR,\cV)$ is a \emph{constructor term} of $\cR$.
We call $\cR$ a \emph{constructor system} if the left-hand side of each rule $\CRule{\ell}{r}{\varphi} \in \cR$ is of the form $f(t_1,\ldots,t_n)$ with $t_1,\ldots,t_n$ constructor terms.

\begin{example}[\cite{FKN17tocl}]
\label{ex:factsignature}
Let $\cS = \{ \INT,\BOOL \}$, and
$\Sigma = \Sigmaterms \cup \Sigmaint$, where
$
\Sigmaterms = \{~ \symb{fact} : \INT \funtype \INT ~\} \cup \{~ \symb{n} : \INT \mid n \in \Int ~\}
$.
Then both $\INT$ and $\BOOL$ are theory sorts.
We also define set and function interpretations, i.e., $\cI_\INT = \Int$, $\cI_\BOOL = \Bool$, and $\cJ$ is defined as above. 
Examples of theory terms are $\symb{0} = \symb{0}+\symb{-1}$ and
$x+\symb{3} \geq y + -\symb{42}$ that are constraints.
$\symb{5}+\symb{9}$ is also a (ground) theory term, but not a constraint.
Using calculation steps, a term $\symb{3}-\symb{1}$ reduces to $\symb{2}$ in one step with the calculation rule $\CRule{x-y}{z}{z = x-y}$, and $\symb{3} \times (\symb{2} \times (\symb{1} \times \symb{1}))$ reduces to $\symb{6}$ in three steps. 
\label{exa:factlctrs}
To implement an LCTRS calculating the \emph{factorial} function, we use the signature $\Sigma$ above 
 and the following rules:
$
\cR_{\symb{fact}}
 = \{\ 
\CRule{\symb{fact}(x)}{\symb{1}}{x \leq \symb{0}},
~~
\CRule{\symb{fact}(x)}{x \times \symb{fact}(x-\symb{1})}{\neg (x \leq \symb{0})}
\ \}
$. 
Expected starting terms are, e.g.,
$\symb{fact}(\symb{42})$ or $\symb{fact}(\symb{fact}(\symb{-4}))$.
Using the constrained rewrite rules in $\cR_{\symb{fact}}$, $\symb{fact}(\symb{3})$ reduces in ten steps to $\symb{6}$.
\end{example}

\subsection{{\SIMP}: a Small Imperative Language with Global Variables and Function Calls}

In this section, we recall the syntax of {\SIMPo}, a small imperative language (cf.~\cite{Fer14}). 
To deal with global variables and function calls, we add them into the ordinary syntax and semantics of {\SIMPo} in a natural way.
We refer to such an extended language as {\SIMP}.

We first show the syntax adopting a C-like notation. 
A \emph{program} $\lrang{P}$ of {\SIMP} is defined by the following BNF:
\begin{eqnarray*}
\lrang{P} &::=& \lrang{D}~~\lrang{F} \\
\lrang{D} & ::= & \epsilon \mid  \Assign{\IntVarDecl{v}}{n};~\lrang{D} \\
\lrang{F} & ::= & \epsilon \mid  \IntFunDecl{f}{\IntVarDecl{x_1},\ldots,\IntVarDecl{x_m}}~=~\{~\lrang{D}~~\lrang{S}~~\Return{E};~\}~~\lrang{F} \\
\lrang{S} &::=& \Skip
					\mid \Assign{v}{\lrang{E}}\,;~\lrang{S} 
					\mid \Assign{v}{\FCall{f}{\lrang{E},\ldots,\lrang{E}}}\,;~\lrang{S}
                    \mid \If{\lrang{B}}{\lrang{S}}{\lrang{S}}~\lrang{S}
                          \mid \While{\lrang{B}}{\lrang{S}}~\lrang{S} \\
\lrang{E} &::=& n 
                     \mid v 
					\mid (\lrang{E} \mathrel{+} \lrang{E}) 
					\mid (\lrang{E} \mathrel{-} \lrang{E}) 
						\\
\lrang{B} &::=& \True \mid \False
                         \mid (\lrang{E} \EQ \lrang{E})
                        \mid (\lrang{E} < \lrang{E})
                        \mid (\lnot \lrang{B})
                         \mid (\lrang{B} \lor \lrang{B}) 
\end{eqnarray*}
where $n \in \Int$, $v \in \cV$, $f$ is a function name, and we may omit brackets in the usual way.
The empty sequence ``$\Skip$'' is used instead of the ``skip'' command.
To simplify discussion, we do not use other operands such as multiplication and division, but 
we use $\ne$, $\leq$, $>$, $\geq$, $\land$, $\implies$, etc, as syntactic sugars.
We also use the $\texttt{for}$-statement as a syntactic sugar.
We assume that a function name $f$ has a fixed arity, and the definition and call of $f$ are consistent with the arity.
A program $P$ consists of declarations of \emph{global variables} (with initialization) and functions.
For a program $P$, we denote the set of global variables appearing in $P$ by $\GVar(P)$:
let $P$ be $\Assign{\IntVarDecl{x_1}}{n_1};\ldots;\Assign{\IntVarDecl{x_k}}{n_k};\IntFunDecl{f}{\ldots}~=~\{\ldots\}~\ldots$, then
$\GVar(P)=\{x_1,\ldots,x_n\}$.
We assume that each function $f$ is defined at most once in a program $P$ and any function called in a function defined in $P$ is defined in $P$.
To simplify the semantics, we assume that local variables in function declarations are different from global variables and parameters of functions.
An \emph{assignment} is defined by a substitution whose range is over the integers, which may be used for terms in the setting of LCTRSs.
We deal with {\SIMP} programs that can be successfully compiled as C programs.

\begin{example}
The program $\PrgrmA$ in Figure~\ref{fig:sum} is a {\SIMP} program, and we have that $\GVar(\PrgrmA) = \{ \var{num} \}$.
\end{example}

\begin{figure}[t]
%
%
\begin{multicols}{2}
\begin{verbatim}
  int num = 0;

  int sum(int x){
    int z = 0;
    num = num + 1;
    if( x <= 0 ){
      z = 0;
    }else{
      z = sum(x - 1);
      z = x + z;
    }
    return z;
  }
\end{verbatim}
\columnbreak
\begin{verbatim}
  int main(){
    int z = 3;
    z = sum(z);
    return 0;
  }
\end{verbatim}
\end{multicols}
\vspace{-5pt}
\caption{a {\SIMP} program $\PrgrmA$ obtained by adding the definition of \texttt{main} into the program for \texttt{sum}.}
\label{fig:sum}
\end{figure}

The semantics $\toExpr$ of integer and boolean expressions is defined as usual (see Figure~\ref{fig:E-semantics}):
given an expression $e$ and an assignment $\sigma$ with $\Dom(\sigma) \supseteq \FVar(e)$, we write $(e,\sigma) \toExpr v$ where $v$ is the resulting value obtained by evaluating $e$ with $\sigma$.
The \emph{transition system} defining the semantics of a {\SIMP} program $P$ is defined by
\begin{itemize}
	\item \emph{configurations} of the form $\Config{\alpha}{\sigma_0}{\sigma_1}$, where 
	\begin{itemize}
		\item $\alpha$ is of the form ``$\delta~\beta$'' with variable declarations $\delta$,%
		\footnote{ Variable declarations $\delta$ may be the empty sequence.}
		 and a statement $\beta$, and
		\item $\sigma_0,\sigma_1$ are assignments for global and local variables, respectively, which are represented by partial functions from variables to integers---the \emph{update} $\Update{\sigma}{x}{n}$ of an assignment $\sigma$ w.r.t.\ $x$ for an integer $n$ is defined as follows:
			if $x=y$ then $\Update{\sigma}{x}{n}(y) = n$, and otherwise, $\Update{\sigma}{x}{n}(y) = \sigma(y)$,
	\end{itemize}
	and
	\item a \emph{transition relation} $\Downarrow_P$ between configurations, which is defined as a \emph{big-step} semantics by the inference rules illustrated in Figure~\ref{fig:S-semantics}.
\end{itemize}
We assume that for any configuration $\Config{\alpha}{\sigma_0}{\sigma_1}$ for a program $P$, the assignment $\sigma_0$ is defined for all global variables of $P$.
To compute the result of a function call $\FCall{f}{e_1,\ldots,e_m}$ under assignments $\sigma_0,\sigma_1$ for $\GVar(P)$ and $\FVar(e_1,\ldots,e_m)\setminus\GVar(P)$, given a fresh variable $x$, we start with the configuration 
$
\Config{\Assign{x}{\FCall{f}{e_1,\ldots,e_m}}}{\sigma_0}{\Update{\sigma_1}{x}{0}}
$.
When $\Config{\Assign{x}{\FCall{f}{e_1,\ldots,e_m}}}{\sigma_0}{\Update{\sigma_1}{x}{0}} \Downarrow_P \Config{\Skip}{\sigma_0'}{\sigma_1'}$ holds, the execution halts and the result of the function call $\FCall{f}{e_1,\ldots,e_m}$ under $\sigma_0,\sigma_1$ is $\sigma_1'(x)$.

\begin{figure}[p]
\begin{tabular}{@{}c@{~~}c@{}}
	$\Inference{n \in \Int}{(n,\sigma) \toExpr n}$
&
	$\Inference{x \in \cV}{(x,\sigma) \toExpr \sigma(x)}$
\\[12pt]
\multicolumn{2}{c}{%
	$\Inference{(e_1,\sigma) \toExpr n_1 ~~~~ (e_2,\sigma) \toExpr n_2 ~~~~ n_1 \bowtie n_2 = n \in \Int ~~~~ {\bowtie} \in \{{+},{-}\}}{(e_1 \bowtie e_2,\sigma) \toExpr n}$
}
\\[12pt]
	$\Inference{(e_1,\sigma) \toExpr n_1 ~~~~ (e_2,\sigma) \toExpr n_2 ~~~~ n_1 = n_2}{(e_1 \EQ e_2,\sigma) \toExpr \True}$
&
	$\Inference{(e_1,\sigma) \toExpr n_1 ~~~~ (e_2,\sigma) \toExpr n_2 ~~~~ n_1 \ne n_2}{(e_1 \EQ e_2,\sigma) \toExpr \False}$
\\[12pt]
	$\Inference{(e_1,\sigma) \toExpr n_1 ~~~~ (e_2,\sigma) \toExpr n_2 ~~~~ n_1 < n_2}{(e_1 < e_2,\sigma) \toExpr \True}$
&
	$\Inference{(e_1,\sigma) \toExpr n_1 ~~~~ (e_2,\sigma) \toExpr n_2 ~~~~ n_1 \geq n_2}{(e_1 < e_2,\sigma) \toExpr \False}$
\\[12pt]
	$\Inference{(\varphi,\sigma) \toExpr \False}{(\neg \varphi,\sigma) \toExpr \True}$
&
	$\Inference{(\varphi,\sigma) \toExpr \True}{(\neg \varphi,\sigma) \toExpr \False}$
\\[12pt]
	$\Inference{(\varphi_1,\sigma) \toExpr b_1 ~~~~ (\varphi_2,\sigma) \toExpr b_2 ~~~~ \True \in \{b_1,b_2\}}{(\varphi_1 \vee \varphi_2),\sigma) \toExpr \True}$
&
	$\Inference{(\varphi_1,\sigma) \toExpr \False ~~~~ (\varphi_2,\sigma) \toExpr \False }{(\varphi_1 \vee \varphi_2),\sigma) \toExpr \False}$	
\\
\end{tabular}
\caption{the inference rules for the semantics of {\SIMP} expressions.}
\label{fig:E-semantics}
\end{figure}

\begin{figure}[p]
\noindent
\begin{tabular}{@{}c@{}}
$
	\Inference{}{\Config{\Skip}{\sigma_0}{\sigma_1} \Downarrow_P \Config{\Skip}{\sigma_0}{\sigma_1}}
$
\qquad
$
	\Inference{\Config{\beta}{\sigma_0}{\Update{\sigma_1}{x}{n}} \Downarrow_P \Config{\Skip}{\sigma_0'}{\sigma_1'}}{\Config{\Assign{\IntVarDecl{x}}{n};~\beta}{\sigma_0}{\sigma_1} \Downarrow_P \Config{\Skip}{\sigma_0'}{\sigma_1'}}
$
\\[12pt]
$
	\Inference{(e,\sigma_0\cup\sigma_1) \toExpr n ~~~~ x \in \GVar(P) ~~~~ \Config{\beta}{\Update{\sigma_0}{x}{n}}{\sigma_1} \Downarrow_P \Config{\Skip}{\sigma_0'}{\sigma_1'}}{\Config{\Assign{x}{e};~\beta}{\sigma_0}{\sigma_1} \Downarrow_P \Config{\Skip}{\sigma_0'}{\sigma_1'}}
$
\\[12pt] 
$
	\Inference{(e,\sigma_0\cup\sigma_1) \toExpr n ~~~~ x \notin \GVar(P) ~~~~ \Config{\beta}{\sigma_0}{\Update{\sigma_1}{x}{n}} \Downarrow_P \Config{\Skip}{\sigma_0'}{\sigma_1'}}{\Config{\Assign{x}{e};~\beta}{\sigma_0}{\sigma_1} \Downarrow_P \Config{\Skip}{\sigma_0'}{\sigma_1'}}
$
\\[12pt]
$
	\Inference{(\varphi,\sigma_0\cup\sigma_1) \toExpr \True ~~~~ \Config{\alpha_1~\beta}{\sigma_0}{\sigma_1} \Downarrow_P \Config{\Skip}{\sigma_0'}{\sigma_1'}}{\Config{\If{\varphi}{\alpha_1}{\alpha_2}~\beta}{\sigma_0}{\sigma_1} \Downarrow_P \Config{\Skip}{\sigma_0'}{\sigma_1'}}
$
\\[12pt]
$
	\Inference{(\varphi,\sigma_0\cup\sigma_1) \toExpr \False ~~~~ \Config{\alpha_2~\beta}{\sigma_0}{\sigma_1} \Downarrow_P \Config{\Skip}{\sigma_0'}{\sigma_1'}}{\Config{\If{\varphi}{\alpha_1}{\alpha_2}~\beta}{\sigma_0}{\sigma_1} \Downarrow_P \Config{\Skip}{\sigma_0'}{\sigma_1'}}
$
\\[12pt]
$
	\Inference{(\varphi,\sigma_0\cup\sigma_1) \toExpr \True ~~~~ \Config{\alpha}{\sigma_0}{\sigma_1} \Downarrow_P \Config{\Skip}{\sigma_0'}{\sigma_1'} ~~~~ \Config{\While{\varphi}{\alpha}~\beta}{\sigma_0'}{\sigma_1'} \Downarrow_P \Config{\Skip}{\sigma_0''}{\sigma_1''}}{\Config{\While{\varphi}{\alpha}~\beta}{\sigma_0}{\sigma_1} \Downarrow_P \Config{\Skip}{\sigma_0''}{\sigma_1''}}
$
\\[12pt]
$
	\Inference{(\varphi,\sigma_0\cup\sigma_1) \toExpr \False ~~~~ \Config{\beta}{\sigma_0}{\sigma_1} \Downarrow_P \Config{\Skip}{\sigma_0'}{\sigma_1'}}{\Config{\While{\varphi}{\alpha}~\beta}{\sigma_0}{\sigma_1} \Downarrow_P \Config{\Skip}{\sigma_0'}{\sigma_1'}}
$
\\[12pt]
$
	\Inference{\! \forall i.\ 
	(e_i,\sigma_0\cup\sigma_1) \!\toExpr\! n_i ~~~ \Config{\alpha}{\sigma_0}{\sigma_2} \!\Downarrow_P\! \Config{\Skip}{\sigma_0'}{\sigma_1'} ~~~ (e,\sigma_0'\cup\sigma_1') \!\toExpr\! n ~~~ \Config{\beta}{\sigma_0''}{\sigma_1''} \!\Downarrow_P\! \Config{\Skip}{\sigma_0'''}{\sigma_1'''} \!}{\Config{\Assign{x}{\FCall{f}{e_1,\ldots,e_m}};~\beta}{\sigma_0}{\sigma_1} \Downarrow_P \Config{\Skip}{\sigma_0'''}{\sigma_1'''}}
$
\\
\end{tabular}
	where 
	\begin{itemize}
		\item $\IntFunDecl{f}{\IntVarDecl{y_1},\ldots,\IntVarDecl{y_m}}~=~\{~\alpha~~\Return{e};~\}$ is in $P$, 
		\item $\sigma_2 = \{y_1\mapsto n_1,\ldots,y_m\mapsto n_m\}$,
		\item if $x \in \GVar(P)$ then $\sigma_0''=\Update{\sigma_0'}{x}{n}$, and otherwise $\sigma_0''=\sigma_0'$,
			and
		\item if $x \in \GVar(P)$ then $\sigma_1''=\sigma_1$, and otherwise $\sigma_1''=\Update{\sigma_1}{x}{n}$
	\end{itemize}
\caption{the inference rules for the semantics of {\SIMP} statements and variable-declarations.}
\label{fig:S-semantics}
\end{figure}

%

\section{A New Approach to Transformations of Imperative Programs}
\label{sec:new-transformation}

In this section, using an example, we introduce a new approach to transformations of imperative programs with function calls and global variables.

\subsection{The Existing Transformation of Functions Accessing Global Variables}
\label{subsec:existing-transformation}

In this section, we briefly recall the transformation of imperative programs with functions accessing global variables~\cite{FKN17tocl} using the program $\PrgrmA$ in Figure~\ref{fig:sum}. 
Unlike $\cRseq$ in Section~\ref{sec:intro}, in the following, we do not optimize generated rewrite rules in LCTRSs in order to make it easier to understand how to precisely transform programs.
The program $\PrgrmA$ is transformed into the following LCTRS with the sort set $\{\INT,\BOOL,\sort{state}\}$ and the standard integer signature $\Sigmaint$~\cite{FKN17tocl}:
\[
	\cRsumrec =
 \left\{
 \begin{array}{r@{\>}c@{\>}lr@{\,}c@{\,}l}
 	\Rule{\symb{sum}(x,\var{num}) &}{& \SumA(x,\var{num},\symb{0})}, \\
 	\Rule{\SumA(x,\var{num},z) &}{& \SumB(x,\var{num}+\symb{1},z)}, \\
 	\CRule{\SumB(x,\var{num},z) &}{& \SumC(x,\var{num},z) &}{& x \leq \symb{0} &}, \\
 	\CRule{\SumB(x,\var{num},z) &}{& \SumE(x,\var{num},z) &}{& \neg (x \leq \symb{0}) &}, \\
	\Rule{\SumC(x,\var{num},z) &}{& \SumD(x,\var{num},\symb{0})}, \\
 	\Rule{\SumD(x,\var{num},z) &}{& \SumI(x,\var{num},z)}, \\
	\Rule{\SumE(x,\var{num},z) &}{& \SumF(x,\var{num},z,\symb{sum}(x-\symb{1}))}, \\
 	\Rule{\SumF(x,\var{num_{old}},z,\symb{return}(y,\var{num_{new}})) &}{& \SumG(x,\var{num_{new}},y)}, \\
 	\Rule{\SumG(x,\var{num},z) &}{& \SumH(x,\var{num},x+z)}, \\
 	\Rule{\SumH(x,\var{num},z) &}{& \SumI(x,\var{num},z)}, \\
 	\Rule{\SumI(x,\var{num},z) &}{& \symb{return}(z,\var{num})}, \\[5pt]
 	
 	\Rule{\symb{main}(\var{num}) &}{& \MainA(\var{num},\symb{3})}, \\
 	\Rule{\MainA(\var{num},z) &}{& \MainB(\var{num},z,\symb{sum}(z,\var{num}))}, \\
 	\Rule{\MainB(\var{num_{old}},z,\symb{return}(y,\var{num_{new}})) &}{& \MainC(y,\var{num_{new}})}, \\
 	\Rule{\MainC(\var{num},z) &}{& \symb{return}(\symb{0},\var{num})} \\
 \end{array}
 \right\}
\]
where 
$\symb{main}: \INT \funtype \sort{state}$,
$\SumA,\SumB,\SumC,\SumD,\SumE,\SumG,\SumH,\SumI: \INT \times \INT \times \INT \funtype \sort{state}$, 
$\symb{sum},\MainA,\MainC,\symb{return}: \INT \times \INT \funtype \sort{state}$,
$\MainB: \INT \times \INT \times \sort{state} \funtype \sort{state}$,
and
$\SumF: \INT \times \INT \times \INT \times \sort{state} \funtype \sort{state}$.
The declaration of local variable $\mathtt{z}$ of $\mathtt{sum}$ is represented by the first rule of $\cRsumrec$, which stores the initial value $\symb{0}$ in the third argument of $\SumA$.
The \texttt{if}-statement is represented by rules of $\SumB$, $\SumD$, and $\SumH$;
The first rule of $\SumB$ enters the body of the \texttt{then}-statement if $x \leq \symb{0}$ holds, and the second rule of $\SumB$ enters the body of the \texttt{else}-statement if $x \leq \symb{0}$ does not hold (i.e., $\neg (x \leq \symb{0})$ holds);
The end of the \texttt{if}-statement is represented by terms rooted by $\SumI$, and the rules of $\SumD$ and $\SumH$ are used to exit the bodies of the \texttt{then}- and \texttt{else}-statements, respectively.

To represent the function call \verb+sum(x - 1)+, the auxiliary function symbol $\SumF$ takes the term $\symb{sum}(x-\symb{1},\var{num})$ as the fourth argument.
The function symbol $\symb{sum}$ takes two arguments, while the original function $\mathtt{sum}$ in the program takes one argument.
This is because the global variable $\mathtt{num}$ is accessed during the execution of $\mathtt{sum}$, and we pass the value stored in $\var{num}$ to $\symb{sum}$, passing the variable itself to $\symb{sum}$ in the constructed rule.
The rule of $\SumA$ increments the global variable $\mathtt{num}$, and thus, we include the value stored in $\var{num}$ in the result of $\symb{sum}$ by means of $\symb{return}(z,\var{num_{new}})$.
The rule of $\SumF$ is used after the reduction of $\symb{sum}(x-\symb{1},\var{num})$, receiving the result by means of the pattern $\symb{return}(y,\var{num_{new}})$.
The updated value stored in $\mathtt{num}$ is received by $\var{num_{new}}$, and the rule of $\SumF$ updates the global variable $\mathtt{num}$ by passing $\var{num_{new}}$ to the second argument of $\SumG$.
We do the same for the function call \verb+sum(z)+ in the auxiliary function symbol $\MainB$.
For the execution of the program, we have the reduction of $\cRsumrec$ illustrated in Figure~\ref{fig:sumrec-reduction}.
Note that the global variable $\mathtt{num}$ is initialized by $\symb{0}$ and we started from $\symb{main}(\symb{0})$.
From the reduction, we can see that the called function is the only running one under sequential execution, and others are waiting for the called function halting.
The approach above to function calls and global variables is enough for sequential execution.

\begin{figure}[t]
\[
\begin{array}{@{}l@{\>}l@{}}
	\symb{main}(\symb{0}) 
	& \to_{\cRsumrec}
		\MainA(\symb{0},\symb{3}) \\
	& \to_{\cRsumrec}
		\MainB(\symb{0},\symb{3},\symb{sum}(\symb{3},\symb{0})) \\
	& \to_{\cRsumrec}
		\MainB(\symb{0},\symb{3},\SumA(\symb{3},\symb{0},\symb{0})) \\
	& \to_{\cRsumrec}
		\MainB(\symb{0},\symb{3},\SumB(\symb{3},\symb{0}+\symb{1})) \\
	& \to_{\cRsumrec}
		\MainB(\symb{0},\symb{3},\SumB(\symb{3},\symb{1},\symb{0})) \\
	& \to_{\cRsumrec}
		\MainB(\symb{0},\symb{3},\SumE(\symb{3},\symb{1},\symb{0})) \\
	& \to_{\cRsumrec}
		\MainB(\symb{0},\symb{3},\SumF(\symb{3},\symb{1},\symb{0},\symb{sum}(\symb{3}-\symb{1},\symb{1}))) \\
	& \to_{\cRsumrec}
		\MainB(\symb{0},\symb{3},\SumF(\symb{3},\symb{1},\symb{0},\symb{sum}(\symb{2},\symb{1}))) \\

	& \to_{\cRsumrec} \cdots \\

	& \to_{\cRsumrec}
		\MainB(\symb{0},\symb{3},\SumG(\symb{3},\symb{4},\symb{3}))) \\
	& \to_{\cRsumrec}
		\MainB(\symb{0},\symb{3},\SumH(\symb{3},\symb{4},\symb{3}+\symb{3}))) \\
	& \to_{\cRsumrec}
		\MainB(\symb{0},\symb{3},\SumH(\symb{3},\symb{4},\symb{6}))) \\
	& \to_{\cRsumrec}
		\MainB(\symb{0},\symb{3},\SumI(\symb{3},\symb{4},\symb{6})) \\
	& \to_{\cRsumrec}
		\MainB(\symb{0},\symb{3},\symb{return}(\symb{6},\symb{4})) \\
	& \to_{\cRsumrec}
		\MainC(\symb{6},\symb{4})\\
	& \to_{\cRsumrec}
		\symb{return}(\symb{0},\symb{4}) \\
\end{array}
\]
\caption{the reduction of $\cRsumrec$ for the execution of the program for \texttt{sum}.}
\label{fig:sumrec-reduction}
\end{figure}

%

In the LCTRS $\cRsumrec$ above, the function symbol $\SumF$ recursively calls $\symb{sum}$ in its fourth argument.
For this reason, the running function is located below $\SumF$, and positions where $\symb{sum}$ is called are not unique.
The above approach to transform function calls is very naive but not so general.
For example, to model parallel execution, a value stored in a global variable does not have to be passed to a particular function or a process because another function or process may access the global variable.

\subsection{Another Approach to Global Variables}

In this section, we show another approach to the treatment of global variables.

To adapt to more general settings such as parallel execution, global variables used like shared memories should be located at fixed addresses (i.e., fixed positions of terms) because they may be accessed from two or more functions or processes.
To keep values stored in global variables at fixed positions, we do not pass (values of) global variables to called functions in order to avoid locally updating global variables.
To this end, we prepare a new function symbol $\symb{env}$ to represent the whole environment for execution, and make $\symb{env}$ have values stored in global variables in its arguments.
 In addition, we make $\symb{env}$ have an extra argument where functions or processes are executed sequentially.%
 \footnote{
 When we execute $n$ ($>1$) processes in parallel, we make $\symb{env}$ have $n$  extra arguments where the $i$-th process is executed in the $i$-th extra argument.
 }
  For example, the process of executing the above program is expressed as follows:
\[
\symb{env}(\symb{0},\symb{main}())
\]
Note that $\symb{env}$ has the sort $\INT \times \sort{state} \funtype \sort{env}$, where $\sort{env}$ is a new sort for environment.
The first argument of $\symb{env}$ is the place where values for the global variable \texttt{num} are stored, and the second argument of $\symb{env}$ is the place where functions are executed, e.g., the main function \texttt{main} is called as in the above term.

We do not change the transformation of \emph{local statements}---statements without accessing global variables---in function definitions.
Let us consider the execution of the program, i.e., \texttt{main}.
All the statements in \texttt{main} and the first statement of $\mathtt{sum}$ are local, and thus, we transform the definition of \texttt{main} as well as $\cRsumrec$:
\[
 \left\{
 \begin{array}{r@{\>}c@{\>}lr@{\,}c@{\,}l}
 	\Rule{\symb{main}() &}{& \MainA(\symb{3})}, \\
 	\Rule{\MainA(z) &}{& \MainB(z,\symb{sum}(z))}, \\
 	\Rule{\MainB(z,\symb{return}(y)) &}{& \MainB(y)}, \\
 	\Rule{\MainC(z) &}{& \symb{return}(\symb{0})}, \\[5pt]
	\Rule{\symb{sum}(x) &}{& \SumA(x,\symb{0})} \\
 \end{array}
 \right\}
\]
The symbol $\symb{return}$ no longer contains values for the global variable $\mathtt{num}$.
In executing the program (i.e., \texttt{main}), the first access to the global variable $\mathtt{num}$ is the statement ``\verb|num = num + 1|''
in the definition of $\symb{sum}$.
The initial term $\symb{env}(\symb{0},\symb{main}())$ can be reduced to $\symb{env}(\symb{0},\MainB(\symb{3},\SumA(\symb{3},\symb{0})))$, and thus, the first execution of the statement ``\verb|num = num + 1|'' can be expressed by the following rewrite rule for $\symb{env}$:
\[
\Rule{\symb{env}(\var{num},\MainB(z_0,\SumA(x,z)))}{\symb{env}(\var{num}+\symb{1},\MainB(z_0,\SumB(x,z)))}
\]
The other statements in the definition of \texttt{sum} are local and we transform them into the following rules, as well as $\cRsumrec$:
\[
 \left\{
 \begin{array}{r@{\>}c@{\>}lr@{\,}c@{\,}l}
 	\CRule{\SumB(x,z) &}{& \SumC(x,z) &}{& x \leq \symb{0} &}, \\
 	\CRule{\SumB(x,z) &}{& \SumE(x,z) &}{& \neg (x \leq \symb{0}) &}, \\
	\Rule{\SumC(x,z) &}{& \SumD(x,\symb{0})}, \\
 	\Rule{\SumD(x,z) &}{& \SumI(x,z)}, \\
	\Rule{\SumE(x,z) &}{& \SumF(x,z,\symb{sum}(x-\symb{1}))}, \\
 	\Rule{\SumF(x,z,\symb{return}(y)) &}{& \SumG(x,y)}, \\
 	\Rule{\SumG(x,z) &}{& \SumH(x,x+z)}, \\
 	\Rule{\SumH(x,z) &}{& \SumI(x,z)}, \\
 	\Rule{\SumI(x,z) &}{& \symb{return}(z)} \\
 \end{array}
 \right\}
\]
Unfortunately, the above rules are not enough to capture all possible executions, e.g. the second execution of ``\verb|num = num + 1|'', which is done by the second call of \texttt{sum}, is not expressed yet.
Thus, we prepare the following rule:
\[
\Rule{\symb{env}(\var{num},\MainB(z_0,\SumF(x',z',\SumA(x,z))))}{\symb{env}(\var{num}+\symb{1},\MainB(z_0,\SumF(x',z',\SumB(x,z))))}
\]
In addition, \texttt{sum} is further recursively called, and we need the following rule:
\[
\Rule{\symb{env}(\var{num},\MainB(z_0,\SumF(x',z',\SumF(x'',z'',\SumA(x,z)))))}{\symb{env}(\var{num}+\symb{1},\MainB(z_0,\SumF(x',z',\SumF(x'',z'',\SumB(x,z)))))}
\]
In summary, we need similar rules for all recursive calls of $\mathtt{sum}$.
The function \texttt{sum} may receive all the (finitely many) integers, and we need many similar rules, all of which express the increment of \texttt{num}.
In addition, we may need other rules for the case where we add other functions calling \texttt{sum} into the program.
More generally, the nesting of function calls cannot be fixed, and thus, along the above approach, we may need infinitely many rewrite rules.
This means that the above approach is not adequate for recursive functions.
%

The troublesome observed by means of $\PrgrmA$ is caused by the fact that positions where $\symb{sum}$ is called are not unique in the above approach.
We will show another approach to avoid this troublesome in the next section.

\subsection{Using a Call Stack for Function Calls}

In this section, using $\PrgrmA$ in Figure~\ref{fig:sum}, we show a new representation of function calls for LCTRSs.

The approach to the treatment of global variables in the previous section needs finitely or infinitely many similar rules for statements accessing global variables, and we have to add other similar rules when we introduce another function that may call itself or other functions. 
As described at the end of the previous section, the cause of this problem is that positions where functions are called in terms rooted by $\symb{env}$ are not unique due to nestings of auxiliary function symbols, one of which is running and the others are waiting.
A solution to fix this problem is to make such positions unique. 
An execution is represented as a term rooted by $\symb{env}$, and global variables are located at fixed positions (i.e., arguments of $\symb{env}$).
The last argument of $\symb{env}$ is used for execution of user-defined functions.
In the last argument, we fix positions where functions are called by using a so-called \emph{call stack}.
To this end, we prepare a binary function symbol $\symb{stack}: \sort{state} \times \sort{process} \funtype \sort{process}$ and a constant $\bot: \sort{process}$ (the empty stack).
To adapt to stacks, we change the sort of $\symb{env}$.
For example, we give $\INT \times \sort{process} \funtype \sort{env}$ to $\symb{env}$, and the initial term for the execution of the program is the following one:
\[
\symb{env}(\symb{0},\symb{stack}(\symb{main}(),\bot))
\]
In this approach, the environment has a stack $\var{s}$ to execute functions by means of the form $\symb{env}(\ldots,\var{s})$.
In calling a function $\symb{f}$ as $\symb{f}(\vec{t})$, we push $\symb{f}(\vec{t})$ as a frame for the function call to the stack $\var{s}$, and after the execution (successfully) halts, we pop the frame of the form $\symb{return}(\ldots)$ from the stack.

Along the idea above, the statements of calling functions in $\PrgrmA$ in Figure~\ref{fig:sum}---the rules of $\cRsumrec$ related to $\SumF$ or $\MainB$---are transformed into the following rules:
\[
 \left\{
 \begin{array}{r@{\>}c@{\>}l}
 	\Rule{\symb{stack}(\SumE(x,z),s) &}{& \symb{stack}(\symb{sum}(x-\symb{1}),\symb{stack}(\SumF(x,z),s))}, \\
 	\Rule{\symb{stack}(\symb{return}(y),\symb{stack}(\SumF(x,z),s)) &}{& \symb{stack}(\SumG(x,y),s)}, \\[5pt]
 	\Rule{\symb{stack}(\MainA(n),s) &}{& \symb{stack}(\symb{sum}(n),\symb{stack}(\MainB(n),s))}, \\
 	\Rule{\symb{stack}(\symb{return}(y),\symb{stack}(\MainB(n))) &}{& \symb{stack}(\SumC(n),s)} \\
 \end{array}
 \right\}
\]
The first and third rules push frames to the stack, and the second and fourth pop frames.
For a term $\symb{env}(x_1,\ldots,x_k,\symb{stack}(\ldots))$, the reduction of user-defined functions is performed at the position $k+1$ of the term, where $x_1,\ldots,x_k$ are global variables. 
For this reason, statements accessing global variables can be represented by the following form:
\[
\CRule{\symb{env}(x_1,\ldots,x_k,\symb{stack}(\symb{f}(\ldots),s))}{\symb{env}(t_1,\ldots,t_k,\symb{stack}(\symb{g}(\ldots),s))}{\varphi}
\]
Note that $s$ in the above rule is a variable.
The statement ``\verb|num = num + 1|'' in $\PrgrmA$---the rule of $\cRsumrec$ to increment $\var{num}$---is transformed into the following rule:
\[
 	\Rule{\symb{env}(\var{num},\symb{stack}(\SumA(x,z),s))}{\symb{env}(\var{num}+\symb{1},\symb{stack}(\SumB(x,z),s))}
\]
In summary, $\PrgrmA$ is transformed into the following LCTRS with the sort set $\{\INT,\BOOL,\sort{state},\sort{env},\sort{process}\}$ and the standard integer signature $\Sigmaint$:
\[
	\cRsumrecstack =
 \left\{
 \begin{array}{r@{\>}c@{\>}lr@{\,}c@{\,}l}
 	\Rule{\symb{sum}(x) &}{& \SumA(x,\symb{0})}, \\
 	\Rule{\symb{env}(\var{num},\symb{stack}(\SumA(x,z),s)) &}{& \symb{env}(\var{num}+\symb{1},\symb{stack}(\SumB(x,z),s))}, \\
 	\CRule{\SumB(x,z) &}{& \SumC(x,z) &}{& x \leq \symb{0} &}, \\
 	\CRule{\SumB(x,z) &}{& \SumE(x,z) &}{& \neg (x \leq \symb{0}) &}, \\
 	\Rule{\SumC(x,z) &}{& \SumD(x,\symb{0})}, \\
 	\Rule{\SumD(x,z) &}{& \SumI(x,z)}, \\
 	\Rule{\symb{stack}(\SumE(x,z),s) &}{& \symb{stack}(\symb{sum}(x-\symb{1}),\symb{stack}(\SumF(x,z),s))}, \\
 	\Rule{\symb{stack}(\symb{return}(y),\symb{stack}(\SumF(x,z),s)) &}{& \symb{stack}(\SumG(x,y),s)}, \\
 	\Rule{\SumG(x,z) &}{& \SumH(x,x+z)}, \\
 	\Rule{\SumH(x,z) &}{& \SumI(x,z)}, \\
 	\Rule{\SumI(x,z) &}{& \symb{return}(z)}, \\[5pt]

 	\Rule{\symb{main}() &}{& \MainA(\symb{3})}, \\
 	\Rule{\symb{stack}(\MainA(z),s) &}{& \symb{stack}(\symb{sum}(z),\symb{stack}(\MainB(z),s))}, \\
 	\Rule{\symb{stack}(\symb{return}(y),\symb{stack}(\MainB(z))) &}{& \symb{stack}(\MainC(y),s)}, \\
 	\Rule{\MainC(z) &}{& \symb{return}(\symb{0})} \\
 \end{array}
 \right\}
\]
For the execution of the program, we have the reduction of $\cRsumrecstack$ illustrated in Figure~\ref{fig:sumrecstack-reduction}.

\begin{figure}[t]
\[
\begin{array}{@{}l@{\>\>\>}l@{}}
	\symb{env}(\symb{0},\symb{stack}(\symb{main}(),\bot))
	& \to_{\cRsumrecstack}
		\symb{env}(\symb{0},\symb{stack}(\MainA(\symb{3}),\bot)) \\
	& \to_{\cRsumrecstack}
		\symb{env}(\symb{0},\symb{stack}(\symb{sum}(\symb{3}),\symb{stack}(\MainB(\symb{3}),\bot))) \\
	& \to_{\cRsumrecstack}
		\symb{env}(\symb{0},\symb{stack}(\SumA(\symb{3},\symb{0}),\symb{stack}(\MainB(\symb{3}),\bot))) \\
	& \to_{\cRsumrecstack}
		\symb{env}(\symb{0}+\symb{1},\symb{stack}(\SumB(\symb{3},\symb{0}),\symb{stack}(\MainB(\symb{3}),\bot))) \\
	& \to_{\cRsumrecstack}
		\symb{env}(\symb{1},\symb{stack}(\SumB(\symb{3},\symb{0}),\symb{stack}(\MainB(\symb{3}),\bot))) \\
	& \to_{\cRsumrecstack}
		\symb{env}(\symb{1},\symb{stack}(\SumE(\symb{3},\symb{0}),\symb{stack}(\MainB(\symb{3}),\bot))) \\
	& \to_{\cRsumrecstack}
		\symb{env}(\symb{1},\symb{stack}(\symb{sum}(\symb{3}-\symb{1}),\symb{stack}(\SumF(\symb{3},\symb{0}),\symb{stack}(\MainB(\symb{3}),\bot)))) \\
	& \to_{\cRsumrecstack}
		\symb{env}(\symb{1},\symb{stack}(\symb{sum}(\symb{2}),\symb{stack}(\SumF(\symb{3},\symb{0}),\symb{stack}(\MainB(\symb{3}),\bot)))) \\

	& \to_{\cRsumrecstack} \cdots \\
	& \to_{\cRsumrecstack}
		\symb{env}(\symb{4},\symb{stack}(\SumG(\symb{3},\symb{3}),\symb{stack}(\MainB(\symb{3}),\bot))) \\
	& \to_{\cRsumrecstack}
		\symb{env}(\symb{4},\symb{stack}(\SumH(\symb{3},\symb{3}+\symb{3}),\symb{stack}(\MainB(\symb{3}),\bot))) \\
	& \to_{\cRsumrecstack}
		\symb{env}(\symb{4},\symb{stack}(\SumH(\symb{3},\symb{6}),\symb{stack}(\MainB(\symb{3}),\bot))) \\
	& \to_{\cRsumrecstack}
		\symb{env}(\symb{4},\symb{stack}(\SumI(\symb{3},\symb{6}),\symb{stack}(\MainB(\symb{3}),\bot))) \\
	& \to_{\cRsumrecstack}
		\symb{env}(\symb{4},\symb{stack}(\symb{return}(\symb{6}),\symb{stack}(\MainB(\symb{3}),\bot))) \\
	& \to_{\cRsumrecstack}
		\symb{env}(\symb{4},\symb{stack}(\MainC(\symb{6},\bot))) \\
	& \to_{\cRsumrecstack}
		\symb{env}(\symb{4},\symb{stack}(\symb{return}(\symb{0}))) \\
\end{array}
\]
\caption{the reduction of $\cRsumrecstack$ for the execution of the program for \texttt{sum}.}
\label{fig:sumrecstack-reduction}
\end{figure}

The function symbol $\symb{stack}$ is a defined symbol of $\cRsumrecstack$, while it looks a constructor for stacks.
If we would like the resulting LCTRS to be a constructor system, rules performing ``push'' and ``pop'' for stacks may be generated as rules for $\symb{env}$.
More precisely, we generate
$
\Rule{\symb{env}(\Xvec,\symb{stack}(t,\var{s}))}{\symb{env}(\Xvec,\symb{stack}(t',s'))}
$
instead of $\Rule{\symb{stack}(t,\var{s})}{\symb{stack}(t',s')}$.

\section{Formalizing the Transformation Using Stacks}
\label{sec:formalization}

In this section, we formalize the idea of using call stacks, which is illustrated in Section~\ref{sec:new-transformation}, showing a precise transformation of {\SIMP} programs into LCTRSs.

In the following, we deal with a {\SIMP} program $P$ which is of the following form:
\begin{equation}
\label{eqn:P-form}
\begin{array}{@{}l@{}}
\Assign{\IntVarDecl{x_1}}{n_1};~\ldots;~\Assign{\IntVarDecl{x_k}}{n_k};\\
\IntFunDecl{\symb{f}_1}{\IntVarDecl{y_{1,1}},\ldots,\IntVarDecl{y_{1,m_1}}}~\{~\alpha_1~~\Return{e_1};~\}\\
~~~~~\ldots \\
\IntFunDecl{\symb{f}_{k'}}{\IntVarDecl{y_{k',1}},\ldots,\IntVarDecl{y_{k',m_{k'}}}}~\{~\alpha_{k'}~~\Return{e_{k'}};~\}\\
\end{array}
\end{equation}
where $\alpha_1,\ldots,\alpha_{k'}$ are statements with local-variable declarations and  no function other than $\symb{f}_1,\ldots,\symb{f}_{k'}$ is called in $\alpha_1,\ldots,\alpha_{k'}$.
Note that $\symb{f}_1,\ldots,\symb{f}_{k'}$ may be self- or mutually recursive.
We abuse integer and boolean expressions of {\SIMP} programs as theory terms and formulas, respectively, over the standard integer signature $\Sigmaint$.
In the following, we denote the sequences $x_1,\ldots,x_k$ and $y_{i,1},\ldots,y_{i,m_i}$ by $\Xvec$ and  $\Yvec[i]$, respectively, and the notation $\Yvec$ stands for $\Yvec[i]$ for some $i \in \{1,\ldots,k'\}$.

First, we define an auxiliary function $\Trans$ that takes a term $t$, a statement $\beta$ with variable declarations, and a non-negative integer $i$ as input, and returns a triple $(u,\cR_\beta,j)$ of a term $u$, a set $\cR_\beta$ of constrained rewrite rules, and a non-negative integer $j$.
The resulting rewrite rules in $\cR_\beta$ reduce an instance of $\symb{env}(\Xvec,\symb{stack}(t,s))$ to an instance of $\symb{env}(\Xvec,\symb{stack}(u,s))$:
if the instance of $\symb{env}(\Xvec,\symb{stack}(t,s))$ corresponds to a configuration $\Config{\beta}{\sigma}{\sigma'}$,
then the instance of $\symb{env}(\Xvec,\symb{stack}(u,s))$ corresponds to a configuration $\Config{\Skip}{\sigma''}{\sigma'''}$ such that 
$\Config{\beta}{\sigma}{\sigma'} \Downarrow_P \Config{\Skip}{\sigma''}{\sigma'''}$.
The input term $t$ is of the form either $\symb{f}_{k''}(\Yvec[k''])$ or $\symb{u}_{i'}(\Yvec[k''],z_{k'',1},\ldots,z_{k'',m'_{k''}})$
where $z_{k'',1},\ldots,z_{k'',m'_{k''}}$ are locally declared variables in $\alpha_{k''}$ and $\symb{u}_{i'}$ is a newly introduced function symbol with $i' < i < j$.
The resulting term $u$ is of the form of $\symb{u}_{j'}(\Yvec[k''],z_{k'',1},\ldots,z_{k'',m''_{k''}})$
where $m'_{k''} \leq m''_{k''}$, $z_{k'',1},\ldots,z_{k'',m''_{k''}}$ are locally declared variables in $\alpha_{k''}$, and $\symb{u}_{j'}$ is a newly introduced function symbol with $i \leq j' < j$.
In the following, we denote the sequence $z_{k'',1},\ldots,z_{k'',m'_{k''}}$ by $\Zvec[k'']$, and the sequence $e'_1,\ldots,e'_{m_i}$ of integer expressions by $\Evec[i]$, and the notation $\Zvec$ stands for $\Zvec[k'']$ for some $k'' \in \{1,\ldots,k'\}$.
\begin{definition}
\label{def:subconverter}
The auxiliary function $\Trans$ is defined as follows:
\begin{itemize}
	\item $\Trans(t, ~ \epsilon, ~ i) = (t, \emptyset, i)$,
	\item $\Trans(g(\Yvec,\Zvec), ~ \Assign{\IntVarDecl{z'}}{n};~\beta, ~ i) = 
		(u, \{~\Rule{g(\Yvec,\Zvec)}{\symb{u}_i(\Yvec,\Zvec,n)}~\}\cup\cR_\beta, j)
		$, where 
		\begin{itemize}
			\item $\Trans(\symb{u}_i(\Yvec,\Zvec,z'), \beta, i+1) = (u, \cR_\beta, j)$,
		\end{itemize}
	\item $\Trans(g(\Yvec,\Zvec), ~ \Assign{z'}{e};~\beta, ~ i) = 
		(u, \{~ \Rule{C[g(\Yvec,\Zvec)]}{(C[\symb{u}_i(\Yvec,\Zvec)])\{z'\mapsto e\}} ~\} \cup \cR_\beta, j)
		$
	if $e$ is an integer expression,
	where 
	\begin{itemize}
		\item if $\{\Xvec\} \cap (\{z'\} \cup \FVar(e)) \ne \emptyset$ then $C[\,] = \symb{env}(\Xvec,\symb{stack}(\Hole,w))$ with a fresh variable $w \notin \{\Xvec,\Yvec,\Zvec\}$, and otherwise $C[\,] = \Hole$, and
		\item $\Trans(\symb{u}_i(\Yvec,\Zvec,z'), \beta, i+1) = (u, \cR_\beta, j)$,
	\end{itemize}

	\item $\Trans(g(\Yvec,\Zvec), ~ \Assign{z'}{\symb{f}_{k''}(\Evec[k''])};~\beta, ~ i) =$
	\[\hspace{-.8ex}
	 (u,\!
	 \left\{
	 \begin{array}{@{}r@{\,}c@{\,}l@{\!}}
	 \Rule{C[\symb{stack}(g(\Yvec,\Zvec),w)] &}{& C[\symb{stack}(\symb{f}_{k''}(\Evec[k'']),\symb{stack}(\symb{u}_i(\Yvec,\Zvec),w))]}, \\
	 \Rule{C'[\symb{stack}(\symb{return}(z''),\symb{stack}(\symb{u}_i(\Yvec,\Zvec),w))] &}{& (C'[\symb{stack}(\symb{u}_{i+1}(\Yvec,\Zvec),w)])\{z' \mapsto z''\}} \\
	 \end{array}
	 \right\} \!\cup \cR_\beta, j)
	\]
	where 
	\begin{itemize}
		\item $w,z''$ are different fresh variables not in $\{\Xvec,\Yvec,\Zvec\}$,
		\item if $\{\Xvec\} \cap \FVar(\Evec[k'']) \ne \emptyset$ then $C[\,] = \symb{env}(\Xvec,\Hole)$, and otherwise $C[\,] = \Hole$,
		\item if $z' \in \{\Xvec\}$ then $C'[\,] = \symb{env}(\Xvec,\Hole)$, and otherwise $C'[\,] = \Hole$,
		and
		\item $\Trans(\symb{u}_{i+1}(\Yvec,\Zvec), \beta, i+2) = (u, \cR_\beta, j)$,
	\end{itemize}

	\item $\Trans(g(\Yvec,\Zvec), ~ \If{\varphi}{\beta_1}{\beta_2}~\beta, ~ i) =$
	\[
	 (u, 
	 \left\{
	 \begin{array}{r@{\>}c@{\>}l@{~~}c@{\,}c@{\,}c@{~~~~~~}r@{\>}c@{\>}l}
	 \CRule{C[g(\Yvec,\Zvec)] &}{& C[\symb{u}_i(\Yvec,\Zvec)] &}{& \varphi &}, &
	 \Rule{u_1 &}{& \symb{u}_{j_2}(\Yvec,\Zvec)}, \\
	 \CRule{C[g(\Yvec,\Zvec)] &}{& C[\symb{u}_{j_1+1}(\Yvec,\Zvec)] &}{& \neg \varphi &}, &
	 \Rule{u_2 &}{& \symb{u}_{j_2}(\Yvec,\Zvec)} \\
	 \end{array}
	 \right\} \cup \cR_{\beta_1} \cup \cR_{\beta_2} \cup \cR_\beta, j)
	\]
	where 
	\begin{itemize}
		\item $\Trans(\symb{u}_i(\Yvec,\Zvec),\beta_1,i+1)=(u_1,\cR_{\beta_1},j_1)$,
		\item $\Trans(\symb{u}_{j_1+1}(\Yvec,\Zvec),\beta_2,j_1+1)=(u_2,\cR_{\beta_2},j_2)$,
		\item if $\{\Xvec\} \cap \FVar(\varphi) \ne \emptyset$ then $C[\,] = \symb{env}(\Xvec,\symb{stack}(\Hole,w))$ with a fresh variable $w \notin \{\Xvec,\Yvec,\Zvec\}$, and otherwise $C[\,] = \Hole$,
		and
		\item $\Trans(\symb{u}_{j_2}(\Yvec,\Zvec),\beta,j_2+1)=(u,\cR_\beta,j)$,
	\end{itemize}

	\item $\Trans(g(\Yvec,\Zvec), ~ \While{\varphi}{\alpha}~\beta, ~ i) =$
	\[
	 (u', 
	 \left\{
	 \begin{array}{r@{\>}c@{\>}l@{~~}c@{\,}c@{\,}c@{~~~~~~}r@{\>}c@{\>}l}
	 \CRule{C[g(\Yvec,\Zvec)] &}{& C[\symb{u}_i(\Yvec,\Zvec)] &}{& \varphi &}, &
	 \Rule{u &}{& g(\Yvec,\Zvec)},
	 \\
	 \CRule{C[g(\Yvec,\Zvec)] &}{& C[\symb{u}_j(\Yvec,\Zvec)] &}{& \neg \varphi &} \\
	 \end{array}
	 \right\} \cup \cR_\alpha \cup \cR_\beta, j')
	\]
	where 
	\begin{itemize}
		\item $\Trans(\symb{u}_i(\Yvec,\Zvec),\alpha,i+1)=(u,\cR_\alpha,j)$,
		\item if $\{\Xvec\} \cap \FVar(\varphi) \ne \emptyset$ then $C[\,] = \symb{env}(\Xvec,\symb{stack}(\Hole,w))$ with a fresh variable $w \notin \{\Xvec,\Yvec,\Zvec\}$, and otherwise $C[\,] = \Hole$.
		and
		\item $\Trans(\symb{u}_j(\Yvec,\Zvec),\beta,j+1)=(u',\cR_\beta,j')$,
	\end{itemize}
\end{itemize}
The sorts of generated symbols are determined as follows:
$\symb{f}_1,\ldots,\symb{f}_{k'},\symb{u}_i,\symb{u}_{i+1},\ldots: \INT \times \cdots \times \INT \funtype \sort{state}$,
$\symb{return}: \INT  \funtype \sort{state}$,
$\symb{env}: \INT \times \cdots \times \INT \times \sort{process} \funtype \sort{env}$,
$\symb{stack}: \sort{state} \times \sort{process} \funtype \sort{process}$,
and 
$\bot: \sort{process}$.
\end{definition}
Using $\Trans$, the transformation illustrated in Section~\ref{sec:new-transformation} is defined as follows.
\begin{definition}
\label{def:converter}
We define $\Convert$ 
by
$
\Convert(P) = \bigcup_{i=1}^{k'} (\cR_i \cup \{~\Rule{C_i[\var{u}_i]}{C_i[\symb{return}(e_i)]}~\})
$,
where $j_1=1$\,%
	\footnote{
	The third argument of $\Trans$ is used to generate new function symbols of the form $\symb{u}_i$.
	We do not have to start with $1$, and we can put any non-negative integer into the third argument of $\Trans$ in order to, e.g., avoid the introduction of the same function symbol for two different inputs.
	}
 and for each $i \in \{1,\ldots,k'\}$,
\begin{itemize}
	\item $\Trans(\symb{f}_i(\Yvec[i]), \alpha_i, j_i)= (u_i, \cR_i, j_{i+1})$,	and
	\item if $\{\Xvec\} \cap \FVar(e_i) \ne \emptyset$ then $C_i[\,] = \symb{env}(\Xvec,\symb{stack}(\Hole,w))$ with a fresh variable $w \notin \{\Xvec\}\cup\FVar(u_i)$, and otherwise $C_i[\,] = \Hole$.
\end{itemize}
\end{definition}
By definition, it is clear that $\Convert(P)$ is an LCTRS with the sort set $\{\INT,\BOOL,\sort{state},\sort{env},\sort{process}\}$ and the standard integer signature $\Sigmaint$.
Note that Definitions~\ref{def:subconverter} and~\ref{def:converter} follow the formulation in~\cite{FNSKS08b}.
Note also that $\cR$ is orthogonal, any term reachable from $(\symb{env}(\Xvec,\symb{stack}(\symb{f}_i(\Yvec[i]),s)))(\sigma_0\cup\sigma_1)$ with a normal form $s$ has at most one redex that is not for $\cRcalc$.%
\footnote{
More precisely, the redex of a term reachable from $(\symb{env}(\Xvec,\symb{stack}(\symb{f}(\Yvec),s)))(\sigma_0\cup\sigma_1)$ is at the root position, position $k+1$, position $(k+1).1$, or position $(k+1).1.p$ for some $p$.
}
Since the reduction of $\cRcalc$ is convergent, we restrict the reduction of $\cRcalc$ to the \emph{leftmost} one.
Then, any subderivation $t \mathrel{\to^*_\cR} t'$ of a derivation from $(\symb{env}(\Xvec,\symb{stack}(\symb{f}_i(\Yvec[i]),s)))(\sigma_0\cup\sigma_1)$ has at most one pass from $t$ to $t'$.


\begin{example}
Consider the program $\PrgrmA$ in Figure~\ref{fig:sum}.
We have that $\Convert(\PrgrmA)=\cRsumrecstack$.
\end{example}

Finally, we show correctness of the transformation $\Convert$.
Recall that $P$ is assumed to be of the form~(\ref{eqn:P-form}).
We first show two auxiliary lemmas.
\begin{lemma}
\label{lem:calc}
Let $\cR$ be an LCTRS, $e$ an integer expression, $n$ an integer, and $\sigma$ an assignments for $\FVar(e)$.
Then, 
$(e,\sigma) \toExpr n$
if and only if 
$e\sigma \mathrel{\to^*_{\cR}} n$.
\end{lemma}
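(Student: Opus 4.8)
The plan is to prove both directions by structural induction on the integer expression $e$, following the BNF for $\lrang{E}$ in the definition of {\SIMP}. The statement relates the big-step evaluation relation $\toExpr$ (defined by the inference rules in Figure~\ref{fig:E-semantics}) to the rewrite relation $\to_\cR^*$, so the heart of the argument is to match each inference rule for expressions with the corresponding calculation rules in $\cRcalc$. Since $e$ is an integer expression over the signature $\Sigmaint$, the only relevant cases are $e = n$ (an integer value), $e = v$ (a variable), and $e = e_1 \bowtie e_2$ with ${\bowtie} \in \{{+},{-}\}$.

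First I would treat the base cases. For $e = n$: here $n$ is already a value, so $(n,\sigma) \toExpr n$ holds by the first inference rule, and $n\sigma = n$ is a normal form, so $n\sigma \to_\cR^* n$ trivially (zero steps); both directions are immediate. For $e = v$ with $v \in \Dom(\sigma)$: we have $(v,\sigma) \toExpr \sigma(v)$ by the second rule, and $v\sigma = \sigma(v)$, which is a value (recall an assignment has range over the integers), so again $v\sigma \to_\cR^* \sigma(v)$ in zero steps. The key observation I would stress is that applying $\sigma$ to the expression replaces each variable by its value, so after substitution the term becomes ground and its calculation value is uniquely determined.

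For the inductive case $e = e_1 \bowtie e_2$, I would argue each direction separately. For the forward direction, suppose $(e_1 \bowtie e_2, \sigma) \toExpr n$; by the inference rule this means $(e_1,\sigma)\toExpr n_1$, $(e_2,\sigma)\toExpr n_2$, and $n_1 \bowtie n_2 = n$. By the induction hypothesis, $e_1\sigma \to_\cR^* n_1$ and $e_2\sigma \to_\cR^* n_2$, so $(e_1\bowtie e_2)\sigma = e_1\sigma \bowtie e_2\sigma \to_\cR^* n_1 \bowtie n_2$ by closure of $\to_\cR^*$ under contexts. Finally $n_1 \bowtie n_2 \to_\cR n$ by the calculation rule $\CRule{x \bowtie y}{z}{z = x \bowtie y} \in \cRcalc$, using the fact (stated in the preliminaries) that $\cJ$ interprets $\bowtie$ exactly as the arithmetic operation in the side condition $n_1 \bowtie n_2 = n$. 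For the converse, I would invoke uniqueness: the preliminaries guarantee that for every ground theory term there is a unique value $c$ with $\Interpret{e\sigma} = \Interpret{c}$, and that $\cRcalc$ is convergent, so $e\sigma \to_\cR^* n$ forces $n = c$; one then reconstructs the derivation of $(e,\sigma)\toExpr n$ from the subterm values, again by the induction hypothesis.

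The main obstacle I anticipate is not any single case but keeping the two formalisms carefully aligned: $\toExpr$ is a relation on (expression, assignment) pairs whereas $\to_\cR^*$ acts on already-substituted ground terms, so I must be precise that $e\sigma$ is a \emph{ground} theory term (every variable of $e$ lies in $\Dom(\sigma)$ and is mapped to a value), which is what makes its $\cRcalc$-normal form a unique value and lets the convergence of $\cRcalc$ drive the backward direction. A secondary subtlety is that $\cRcalc$ contracts only at a position whose subterms are already values, so in the forward direction the contexts-and-values structure of the reduction $e_1\sigma \bowtie e_2\sigma \to_\cR^* n_1 \bowtie n_2 \to_\cR n$ must be assembled in the right order (reduce the arguments first, then perform the outermost calculation); confluence of $\cRcalc$ ensures this ordering is harmless but it is the point where the proof must be stated with care.
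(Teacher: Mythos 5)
Your proof is correct and matches the paper's intent: the paper dismisses this lemma in one line (``Trivial by the definitions of $\toExpr$ and $\cRcalc$''), and your structural induction on $e$, matching each inference rule of Figure~\ref{fig:E-semantics} with the corresponding calculation rule and using groundness of $e\sigma$ plus convergence of $\cRcalc$ for the converse, is exactly the standard unwinding of those definitions. Nothing is missing; you have simply written out what the authors left implicit.
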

\begin{proof}
Trivial by the definitions of $\toExpr$ and $\cRcalc$.	
\qed
\end{proof}
%

\begin{lemma}[Correctness of $\Trans$]
\label{lem:correctness}
Let 
$\cR = \Convert(P)$, and $\beta$ a substatement of $\alpha_i$ for some $i\in\{1,\ldots,k'\}$ (i.e., $\beta$ appears in $\alpha_i$).
Then, both of the following hold:
\begin{enumerate}
	\renewcommand{\labelenumi}{(\alph{enumi})}
	\item
		$\Trans(t,\beta,i')$ for any $t$ and $i'$ is defined,
		and
	\item
	$\Trans(t,\beta,i')$ for some $t$ and $i'$ is computed during the computation of $\Convert(P)$.
\end{enumerate}
Suppose that $\Trans(g(\Yvec,\Zvec),\beta,i')$ is computed for $\Convert(P)$.
Let $\Trans(g(\Yvec,\Zvec),\beta,i')=(u,\cR_\beta,j)$, and $s$ be a normal form of $\cR$, $\sigma_0,\sigma_0'$ assignments for $\GVar(P)$, and $\sigma_1,\sigma_1'$ assignments for $\{\Yvec,\Zvec\}\cup(\FVar(\beta)\setminus\{\Xvec\})$.
Then, both of the following hold:
\begin{enumerate}
	\setcounter{enumi}{2}
	\renewcommand{\labelenumi}{(\alph{enumi})}
	\item 
		$\cR_\beta \subseteq \cR$,
	\item 
		$\Config{\beta}{\sigma_0}{\sigma_1} \Downarrow_P \Config{\Skip}{\sigma_0'}{\sigma_1'}$ if and only if 
		\[
		(\symb{env}(\Xvec,\symb{stack}(g(\Yvec,\Zvec),s))(\sigma_0\cup\sigma_1) \mathrel{\to^*_{\cR}} (\symb{env}(\Xvec,\symb{stack}(u,s)))(\sigma_0'\cup\sigma_1').
		\]
\end{enumerate}
\end{lemma}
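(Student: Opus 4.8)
The plan is to dispatch (a)--(c) quickly and then concentrate on the bidirectional simulation in (d). Parts (a)--(c) follow by structural induction on $\beta$, mirroring the recursive structure of $\Trans$ in Definition~\ref{def:subconverter}: totality (a) holds because every clause recurses on a proper substatement with a strictly larger counter; claim (b) holds because the top-level call $\Trans(\symb{f}_i(\Yvec[i]),\alpha_i,j_i)$ made by $\Convert$ triggers exactly the recursive calls listed in each clause, so the call for every substatement is performed; and (c) holds because each clause unions its freshly generated rules into the $\cR_\beta$ that $\Convert$ ultimately collects into $\bigcup_i \cR_i$.

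Before (d) I would isolate one routine fact about the big-step semantics, a \emph{sequential composition lemma}: for a terminating statement $\gamma$ and a continuation $\delta$, the judgement $\Config{\gamma~\delta}{\sigma_0}{\sigma_1}\Downarrow_P\Config{\Skip}{\sigma_0'}{\sigma_1'}$ holds iff there are intermediate assignments with $\Config{\gamma}{\sigma_0}{\sigma_1}\Downarrow_P\Config{\Skip}{\sigma_0''}{\sigma_1''}$ and $\Config{\delta}{\sigma_0''}{\sigma_1''}\Downarrow_P\Config{\Skip}{\sigma_0'}{\sigma_1'}$. This is proved by structural induction on $\gamma$, and it is precisely what lets me align the \texttt{if} clause of $\Trans$---which processes the then-branch $\beta_1$ and the continuation $\beta$ through \emph{separate} recursive calls joined at a merge symbol $\symb{u}_{j_2}$---with the single combined premise $\Config{\alpha_1~\beta}{\sigma_0}{\sigma_1}\Downarrow_P\Config{\Skip}{\sigma_0'}{\sigma_1'}$ of the big-step conditional rule.

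For (d) I would prove the two implications by different inductions. For ``$\Rightarrow$'' I induct on the height of the derivation of $\Config{\beta}{\sigma_0}{\sigma_1}\Downarrow_P\Config{\Skip}{\sigma_0'}{\sigma_1'}$, case-splitting on the last rule applied (equivalently, on the form of $\beta$). In each case I apply the head rule(s) that $\Trans$ generated, using Lemma~\ref{lem:calc} to evaluate the embedded expressions, and then invoke the induction hypothesis on the strictly smaller subderivation(s): the continuation in the sequential cases, the two pieces supplied by the composition lemma in the \texttt{if} case, and both the body and the recursive-loop subderivations in the \texttt{while} case. The context $C$---a bare hole versus $\symb{env}(\Xvec,\symb{stack}(\Hole,w))$---is exactly the bookkeeping that records whether a global variable is read or written, so the reduction touches the $\symb{env}$-arguments precisely when $\sigma_0$ changes. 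For the function-call case I apply the induction hypothesis to the callee's body $\alpha_{k''}$ (itself a subderivation), after observing that once the frame is pushed, the stack below the callee, namely $\symb{stack}(\symb{u}_i(\Yvec,\Zvec),s)$, is itself a normal form of $\cR$ and hence can serve as the normal-form stack required by the statement; the \texttt{return}-rule generated by $\Convert$ then yields $\symb{return}(e_{k''})$ and the pop-rule restores control. For ``$\Leftarrow$'' I induct on the length of the rewrite sequence, exploiting the orthogonality and leftmost-calculation remark after Definition~\ref{def:converter}: the reduction is deterministic with a unique non-calculation redex, so the first genuine step fixes the form of $\beta$ and hence the big-step rule to build, and the induction hypothesis applies to the strictly shorter remaining and callee subsequences.

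The main obstacle is the \texttt{while} and function-call cases in the ``$\Leftarrow$'' direction, where I must parse a rewrite sequence into loop iterations and into a self-contained callee run. The loop yields cycles $g(\Yvec,\Zvec)\to^* u\to g(\Yvec,\Zvec)$, so I segment the sequence at each return to $g$ and peel off one iteration at a time; for a call I must argue, from determinism and the fact that the pushed-under frame $\symb{stack}(\symb{u}_i(\Yvec,\Zvec),s)$ is a normal form, that the subsequence strictly between a matching push and its pop is exactly a terminated callee run to which the induction hypothesis applies. Showing that this segmentation is \emph{forced}---no interleaving, and push/pop match up---is the delicate point; everything else reduces to routine bookkeeping of the value substitutions threaded through the $\symb{u}_i$ arguments and through the $\symb{env}$ positions.
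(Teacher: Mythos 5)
Your proposal coincides with the paper's own proof: (a)--(c) are dismissed as immediate from the definitions, and (d) is argued exactly as you describe---the \textit{only-if} direction by induction on the height of the big-step inference and the \textit{if} direction by induction on the length of the rewrite sequence, with Lemma~\ref{lem:calc} evaluating expressions and the uniqueness of derivations (orthogonality plus the leftmost-calculation restriction) forcing the push/pop segmentation in the function-call case, which is the only case the paper works out in detail. Your explicit sequential-composition lemma for the big-step semantics and your observation that the pushed-under frame $\symb{stack}(\symb{u}_{i'}(\ldots),s)$ is itself a normal form of $\cR$ make precise two points the paper leaves implicit (it even invokes the induction hypothesis ``for an arbitrary term $s'$'' despite the lemma's normal-form hypothesis), so these are welcome refinements rather than a different route.
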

\begin{proof}
By definition, it is clear that (a)--(c) hold.
Using Lemma~\ref{lem:calc}, the \textit{only-if} and \textit{if} parts of (d) can be proved by induction on the height of the inference for $\Config{\beta}{\sigma_0}{\sigma_1} \Downarrow_P \Config{\Skip}{\sigma_0'}{\sigma_1'}$ and the length of $\to^*_{\cR}$-steps, respectively.
The difference from the proof in~\cite{FNSKS08b} is the treatment of global variables and function calls, while \cite{FNSKS08b} adopts a small-step semantics for their imperative programs.
Below, we only show the case where $\beta$ is $\Assign{z'}{\FCall{\symb{f}_{k''}}{\Evec[k'']}};~\beta'$ for some $k'' \in \{1,\ldots,k'\}$.
Let $\Trans(g(\Yvec,\Zvec),\beta,i')$ return
\[
	 (u,\!
	 \left\{
	 \begin{array}{@{}r@{\,}c@{\,}l@{}}
	 \Rule{C'[\symb{stack}(g(\Yvec,\Zvec),w)] &}{& C'[\symb{stack}(\symb{f}_{k''}(\Evec[k'']),\symb{stack}(\symb{u}_{i'}(\Yvec,\Zvec),w))]}, \\
	 \Rule{C''[\symb{stack}(\symb{return}(z''),\symb{stack}(\symb{u}_{i'}(\Yvec,\Zvec),w))] &}{& (C''[\symb{stack}(\symb{u}_{i'+1}(\Yvec,\Zvec),w)])\{z' \mapsto z''\}} \\
	 \end{array}
	 \right\} \! \cup \cR_{\beta'}, j)
	\]
	where 
	\begin{itemize}
		\item $w,z''$ are different fresh variables not in $\{\Xvec,\Yvec,\Zvec\}$,
		\item if $\{\Xvec\} \cap \FVar(\Evec[k'']) \ne \emptyset$ then $C'[\,] = \symb{env}(\Xvec,\Hole)$, and otherwise $C'[\,] = \Hole$,
		\item if $z' \in \{\Xvec\}$ then $C''[\,] = \symb{env}(\Xvec,\Hole)$, and otherwise $C''[\,] = \Hole$,
		and
		\item $\Trans(\symb{u}_{i'+1}(\Yvec,\Zvec),\beta',i'+2)=(u,\cR_{\beta'},j)$,
	\end{itemize}
Then, it follows from (c) that the above two rules are included in $\cR$.

We first show the \textit{only-if} part.
Assume that
$
\Config{\Assign{z'}{\FCall{\symb{f}_{k''}}{\Evec[k'']}};~\beta'}{\sigma_0}{\sigma_1} \Downarrow_P \Config{\Skip}{\sigma_0'}{\sigma_1'}
$ holds
with
\begin{itemize}
	\item $(e'_i,\sigma_0\cup\sigma_1) \toExpr n_i$ for all $1 \leq i \leq m_{k''}$,
	\item $\Config{\alpha_{k''}}{\sigma_0}{\sigma_1} \Downarrow_P \Config{\Skip}{\sigma_0''}{\sigma_1''}$,
	\item $(e_{k''},\sigma_0''\cup\sigma_1'') \toExpr n$,
	and
	\item $\Config{\beta'}{\sigma_0'''}{\sigma_1'''} \Downarrow_P \Config{\Skip}{\sigma_0'}{\sigma_1'}$
\end{itemize}
where 
	\begin{itemize}
		\item if $z' \in \GVar(P)$ then $\sigma_0'''=\Update{\sigma_0''}{z'}{n}$, and otherwise $\sigma_0'''=\sigma_0''$,
			and
		\item if $z' \in \GVar(P)$ then $\sigma_1'''=\sigma_1$, and otherwise $\sigma_1'''=\Update{\sigma_1}{z'}{n}$.
	\end{itemize}
It follows from Lemma~\ref{lem:calc} and $\Rule{C'[\symb{stack}(g(\Yvec,\Zvec),w)]}{C'[\symb{stack}(\symb{f}_{k''}(\Evec[k'']),\symb{stack}(\symb{u}_{i'}(\Yvec,\Zvec),w))]} \in \cR$ that
\[
\begin{array}{@{}l@{\>}c@{\>}l@{}}
\lefteqn{(\symb{env}(\Xvec,\symb{stack}(g(\Yvec,\Zvec),s)))(\sigma_0\cup\sigma_1)} \\
~~~~
&
\mathrel{\to_\cR}
&
(\symb{env}(\Xvec,\symb{stack}(\symb{f}_{k''}(\Evec[k'']),\symb{stack}(\symb{u}_{i'}(\Yvec,\Zvec),s))))(\sigma_0\cup\sigma_1)
\\
&
=
&
(\symb{env}((\Xvec)\sigma_0,\symb{stack}(\symb{f}_{k''}((\Evec[k''])(\sigma_0\cup\sigma_1)),\symb{stack}(\symb{u}_{i'}((\Yvec)\sigma_1,(\Zvec)\sigma_1),s))))
\\
&
\mathrel{\to^*_\cR}
&
(\symb{env}((\Xvec)\sigma_0,\symb{stack}(\symb{f}_{k''}(n_1,\ldots,n_{m_{k''}}),\symb{stack}(\symb{u}_{i'}((\Yvec)\sigma_1,(\Zvec)\sigma_1),s)))).
\\
\end{array}
\]
By definition, $\Trans(\symb{f}_{k''}(\Yvec[k'']),\alpha_{k''},j_{k''})$ is computed, and let $\Trans(\symb{f}_{k''}(\Yvec[k'']),\alpha_{k''},j_{k''})=(u_{k''},\cR_{\alpha_{k''}},j_{k''+1})$.
Then, by definition, we have that $\cR_{\alpha_{k''}}\cup\{~\Rule{C_{k''}[u_{k''}]}{C_{k''}[\symb{return}(e_{k''})]}~\} \subseteq \cR$ where $C_{k''}[\,]$ is a context defined in Definition~\ref{def:converter}.
Let $\sigma_2=\{y_1\mapsto n_1,~\ldots,~y_{m_{k''}} \mapsto n_{m_{k''}}\}$.
Then, by the induction hypothesis, we have that 
\[
(\symb{env}(\Xvec,\symb{stack}(\symb{f}_{k''}(\Yvec[k'']),s'))(\sigma_0\cup\sigma_2)
\mathrel{\to^*_\cR}
(\symb{env}(\Xvec,\symb{stack}(u_{k''},s'))(\sigma_0''\cup\sigma_1'').
\]
for an arbitrary term $s'$.
Thus, we have that
\[
\begin{array}{@{}l@{\>}c@{\>}l@{}}
\lefteqn{(\symb{env}((\Xvec)\sigma_0,\symb{stack}(\symb{f}_{k''}(n_1,\ldots,n_{m_{k''}}),\symb{stack}(\symb{u}_{i'}((\Yvec)\sigma_1,(\Zvec)\sigma_1),s))))}
\\
~~~~
&
=
&
\symb{env}((\Xvec)\sigma_0,\symb{stack}(\symb{f}_{k''}((\Yvec[k''])\sigma_2),\symb{stack}(\symb{u}_{i'}((\Yvec)\sigma_1,(\Zvec)\sigma_1),s))))
\\
&
\mathrel{\to^*_\cR}
&
\symb{env}((\Xvec)\sigma_0'',\symb{stack}(u_{k''}(\sigma_0''\cup\sigma_1''),\symb{stack}(\symb{u}_{i'}((\Yvec)\sigma_1,(\Zvec)\sigma_1),s)))).
\\
\end{array}
\]
It follows from $\Rule{C_{k''}[u_{k''}]}{C_{k''}[\symb{return}(e_{k''})]} \in \cR$ and Lemma~\ref{lem:calc} that
\[
\begin{array}{@{}l@{\>}c@{\>}l@{}}
\lefteqn{\symb{env}((\Xvec)\sigma_0'',\symb{stack}(u_{k''}(\sigma_0''\cup\sigma_1''),\symb{stack}(\symb{u}_{i'}((\Yvec)\sigma_1,(\Zvec)\sigma_1),s))))}
\\
~~~~
&
\mathrel{\to^*_\cR}
&
\symb{env}((\Xvec)\sigma_0'',\symb{stack}(\symb{return}(e_{k''}(\sigma_0''\cup\sigma_1'')),\symb{stack}(\symb{u}_{i'}((\Yvec)\sigma_1,(\Zvec)\sigma_1),s))))
\\
&
\mathrel{\to^*_\cR}
&
\symb{env}((\Xvec)\sigma_0'',\symb{stack}(\symb{return}(n),\symb{stack}(\symb{u}_{i'}((\Yvec)\sigma_1,(\Zvec)\sigma_1),s)))).
\\
\end{array}
\]
Since $\Rule{C''[\symb{stack}(\symb{return}(z''),\symb{stack}(\symb{u}_{i'}(\Yvec,\Zvec),w))]}{(C''[\symb{stack}(\symb{u}_{i'+1}(\Yvec,\Zvec),w)])\{z' \mapsto z''\}} \in \cR$, we have that
\[
\begin{array}{@{}l@{}}
\symb{env}((\Xvec)\sigma_0'',\symb{stack}(\symb{return}(n),\symb{stack}(\symb{u}_{i'}((\Yvec)\sigma_1,(\Zvec)\sigma_1),s))))
\\
~~~~
{} \mathrel{\to^*_\cR}
\symb{env}((\Xvec)\sigma_0''',\symb{stack}(\symb{u}_{i'+1}((\Yvec)\sigma_1''',(\Zvec)\sigma_1'''),s)))
=
(\symb{env}(\Xvec,\symb{stack}(\symb{u}_{i'+1}(\Yvec,\Zvec),s))))(\sigma_0'''\cup\sigma_1''').
\\
\end{array}
\]
By the induction hypothesis, we have that
\[
(\symb{env}(\Xvec,\symb{stack}(\symb{u}_{i'+1}(\Yvec,\Zvec),s))))(\sigma_0'''\cup\sigma_1''')
\mathrel{\to^*_\cR}
(\symb{env}(\Xvec,\symb{stack}(u,s))))(\sigma_0'\cup\sigma_1').
\]
Therefore, the claim holds.

Next, we show the \textit{if} part.
Assume that
\[
(\symb{env}(\Xvec,\symb{stack}(g(\Yvec,\Zvec),s)))(\sigma_0\cup\sigma_1)
\mathrel{\to^*_\cR}
(\symb{env}(\Xvec,\symb{stack}(\symb{u}_{i'+1}(\Yvec,\Zvec),s)))(\sigma_0'\cup\sigma_1').
\]
Then, since derivations are unique, we can let the above derivation be the following one:
\[
\begin{array}{@{}l@{\>}c@{\>}l@{}}
\lefteqn{(\symb{env}(\Xvec,\symb{stack}(g(\Yvec,\Zvec),s)))(\sigma_0\cup\sigma_1)} \\
~~~~
&
\mathrel{\to_\cR}
&
(\symb{env}(\Xvec,\symb{stack}(\symb{f}_{k''}(\Evec[k'']),\symb{stack}(\symb{u}_{i'}(\Yvec,\Zvec),s))))(\sigma_0\cup\sigma_1)
\\
&
=
&
(\symb{env}((\Xvec)\sigma_0,\symb{stack}(\symb{f}_{k''}((\Evec[k''])(\sigma_0\cup\sigma_1)),\symb{stack}(\symb{u}_{i'}((\Yvec)\sigma_1,(\Zvec)\sigma_1),s))))
\\
&
\mathrel{\to^*_\cR}
&
(\symb{env}((\Xvec)\sigma_0,\symb{stack}(\symb{f}_{k''}(n_1,\ldots,n_{m_{k''}}),\symb{stack}(\symb{u}_{i'}((\Yvec)\sigma_1,(\Zvec)\sigma_1),s))))
\\
&
=
&
\symb{env}((\Xvec)\sigma_0,\symb{stack}(\symb{f}_{k''}((\Yvec[k''])\sigma_2),\symb{stack}(\symb{u}_{i'}((\Yvec)\sigma_1,(\Zvec)\sigma_1),s))))
\\
&
\mathrel{\to^*_\cR}
&
\symb{env}((\Xvec)\sigma_0'',\symb{stack}(u_{k''}(\sigma_0''\cup\sigma_1''),\symb{stack}(\symb{u}_{i'}((\Yvec)\sigma_1,(\Zvec)\sigma_1),s))))
\\
&
\mathrel{\to^*_\cR}
&
\symb{env}((\Xvec)\sigma_0'',\symb{stack}(\symb{return}(e_{k''}(\sigma_0''\cup\sigma_1'')),\symb{stack}(\symb{u}_{i'}((\Yvec)\sigma_1,(\Zvec)\sigma_1),s))))
\\
&
\mathrel{\to^*_\cR}
&
\symb{env}((\Xvec)\sigma_0'',\symb{stack}(\symb{return}(n),\symb{stack}(\symb{u}_{i'}((\Yvec)\sigma_1,(\Zvec)\sigma_1),s))))
\\
& =
&
\symb{env}(\Xvec,\symb{stack}(\symb{return}(n),\symb{stack}(\symb{u}_{i'}(\Yvec,\Zvec),s))))(\sigma_0''\cup\sigma_1'')
\\
&
\mathrel{\to_\cR}
&
\symb{env}(\Xvec,\symb{stack}(\symb{u}_{i'+1}(\Yvec,\Zvec),s)))(\sigma_0'''\cup\sigma_1''')
\\
&
\mathrel{\to^*_\cR}
&
(\symb{env}(\Xvec,\symb{stack}(u,s))))(\sigma_0'\cup\sigma_1')
\\
\end{array}
\]
where 
	\begin{itemize}
		\item $\sigma_2=\{y_1\mapsto n_1,~\ldots,~y_{m_{k''}} \mapsto n_{m_{k''}}\}$,
		\item if $z' \in \GVar(P)$ then $\sigma_0'''=\Update{\sigma_0''}{z'}{n}$, and otherwise $\sigma_0'''=\sigma_0''$,
			and
		\item if $z' \in \GVar(P)$ then $\sigma_1'''=\sigma_1$, and otherwise $\sigma_1'''=\Update{\sigma_1}{z'}{n}$.
	\end{itemize}
It follows from Lemma~\ref{lem:calc} and the induction hypothesis that
\begin{itemize}
	\item $(e_i,\sigma_0\cup\sigma_1) \toExpr n_i$ for all $1 \leq i \leq m$,
	\item $\Config{\alpha_{k''}}{\sigma_0}{\sigma_1} \Downarrow_P \Config{\Skip}{\sigma_0''}{\sigma_1''}$,
	\item $(e_{k''},\sigma_0''\cup\sigma_1'') \toExpr n$,
	and
	\item $\Config{\beta'}{\sigma_0'''}{\sigma_1'''} \Downarrow_P \Config{\Skip}{\sigma_0'}{\sigma_1'}$
\end{itemize}
and thus, 
$
\Config{\Assign{z'}{\FCall{\symb{f}_{k''}}{\Evec[k'']}};~\beta'}{\sigma_0}{\sigma_1} \Downarrow_P \Config{\Skip}{\sigma_0'}{\sigma_1'}
$ holds.
Therefore, the claim holds.
\qed
\end{proof}

Correctness of $\Convert$ 
can easily be proved by using Lemmas~\ref{lem:calc} and~\ref{lem:correctness}.
\begin{theorem}[Correctness of $\Convert$]
\label{thm:correctness}
Let 
$\cR = \Convert(P)$, 
$n \in \Int$,
$s$ a normal form of $\cR$,
$i \in \{1,\ldots,k'\}$,
$\sigma_0,\sigma_0'$ assignments for $\Xvec$,
and
$\sigma_1,\sigma_1'$ assignments for $\Yvec[i]$.
Then,
$\Config{\alpha_i}{\sigma_0}{\sigma_1} \Downarrow_P \Config{\Skip}{\sigma_0'}{\sigma_1'}$ and $(e_i,\sigma_0'\cup\sigma_1') \toExpr n$
if and only if 
$
(\symb{env}(\Xvec,\symb{stack}(\symb{f}_i(\Yvec[i]),s)))(\sigma_0\cup\sigma_1)
\mathrel{\to^*_{\cR}} (\symb{env}(\Xvec,\symb{stack}(\symb{return}(n),s)))(\sigma_0'\cup\sigma_1')
$.
\end{theorem}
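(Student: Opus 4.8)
The plan is to glue together the statement-level correctness established in Lemma~\ref{lem:correctness}(d) with the handling of the trailing \texttt{return} statement, using Lemma~\ref{lem:calc} to evaluate the returned expression. By Definition~\ref{def:converter} we have $\Trans(\symb{f}_i(\Yvec[i]),\alpha_i,j_i)=(u_i,\cR_i,j_{i+1})$ and the rule $\Rule{C_i[u_i]}{C_i[\symb{return}(e_i)]}$ belongs to $\cR$. Since $\alpha_i$ is trivially a substatement of itself, Lemma~\ref{lem:correctness}(d) applies with $g(\Yvec,\Zvec)=\symb{f}_i(\Yvec[i])$ (here $\Zvec$ is empty), giving the equivalence between $\Config{\alpha_i}{\sigma_0}{\sigma_1}\Downarrow_P\Config{\Skip}{\sigma_0'}{\sigma_1'}$ and the reduction $(\symb{env}(\Xvec,\symb{stack}(\symb{f}_i(\Yvec[i]),s)))(\sigma_0\cup\sigma_1)\to^*_{\cR}(\symb{env}(\Xvec,\symb{stack}(u_i,s)))(\sigma_0'\cup\sigma_1')$. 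So the task reduces to bridging the gap between $u_i$ and $\symb{return}(n)$.

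For the \emph{only-if} direction I would chain three ingredients. Starting from the big-step derivation together with $(e_i,\sigma_0'\cup\sigma_1')\toExpr n$, Lemma~\ref{lem:correctness}(d) reduces the initial term to $(\symb{env}(\Xvec,\symb{stack}(u_i,s)))(\sigma_0'\cup\sigma_1')$; one application of the return rule rewrites this to $(\symb{env}(\Xvec,\symb{stack}(\symb{return}(e_i),s)))(\sigma_0'\cup\sigma_1')$, where the instance of $e_i$ is the ground theory term $e_i(\sigma_0'\cup\sigma_1')$; and finally Lemma~\ref{lem:calc} turns $e_i(\sigma_0'\cup\sigma_1')$ into $n$ by leftmost calculation, producing $(\symb{env}(\Xvec,\symb{stack}(\symb{return}(n),s)))(\sigma_0'\cup\sigma_1')$. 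The point of the context $C_i$ is exactly to make the values of $\Xvec$ available when $e_i$ mentions a global variable, so that the substituted $e_i$ is ground in either case.

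For the \emph{if} direction I would run the same chain backwards, relying on the determinism recorded after Definition~\ref{def:converter}: $\cR$ is orthogonal and, once the leftmost calculation strategy is fixed, every term reachable from $(\symb{env}(\Xvec,\symb{stack}(\symb{f}_i(\Yvec[i]),s)))(\sigma_0\cup\sigma_1)$ admits a unique derivation. Hence the given reduction to $\symb{return}(n)$ must factor as a reduction to $(\symb{env}(\Xvec,\symb{stack}(u_i,s)))(\sigma_0'\cup\sigma_1')$, followed by the single return rule available for $\symb{f}_i$ (its left-hand side is rooted by the auxiliary symbol $u_i$ specific to $\symb{f}_i$, so no other rule applies), followed by calculations evaluating $e_i$. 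The first segment yields $\Config{\alpha_i}{\sigma_0}{\sigma_1}\Downarrow_P\Config{\Skip}{\sigma_0'}{\sigma_1'}$ by the \emph{if} part of Lemma~\ref{lem:correctness}(d), and the last segment gives $(e_i,\sigma_0'\cup\sigma_1')\toExpr n$ by Lemma~\ref{lem:calc}.

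I expect the only delicate point to be the bookkeeping of the local store rather than any genuinely new argument. Executing $\alpha_i$ introduces locally declared variables, so the $\sigma_1'$ delivered by Lemma~\ref{lem:correctness}(d) is defined on $\{\Yvec[i],\Zvec\}$ and not merely on $\Yvec[i]$; this is harmless because the target term $\symb{env}(\Xvec,\symb{stack}(\symb{return}(n),s))$ mentions no local variable, so $\sigma_1'$ acts vacuously there and may be restricted back to $\Yvec[i]$, and I would state this explicitly so that the domains in the theorem and in Lemma~\ref{lem:correctness} match up. The factorisation used in the \emph{if} direction is the other place needing care, but it is fully supported by the orthogonality and unique-derivation observations already established, so no confluence argument has to be redone.
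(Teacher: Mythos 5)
Your proposal is correct and follows essentially the same route as the paper's proof: apply Lemma~\ref{lem:correctness}(d) with $g(\Yvec,\Zvec)=\symb{f}_i(\Yvec[i])$, bridge to $\symb{return}(n)$ via the rule $\Rule{C_i[u_i]}{C_i[\symb{return}(e_i)]}$ and Lemma~\ref{lem:calc}, and for the \emph{if} direction factor the given reduction before reading off the big-step derivation. The two points you flag explicitly---justifying the factorisation by orthogonality plus the leftmost-calculation restriction, and the fact that the $\sigma_1'$ from Lemma~\ref{lem:correctness} lives on $\{\Yvec[i],\Zvec\}$ and must be restricted to $\Yvec[i]$---are exactly what the paper leaves implicit (it simply assumes the factored derivation), so your write-up is, if anything, slightly more careful than the original.
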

\begin{proof}
We first show the \textit{only-if} part.
Assume that 
$
\Config{\alpha_i}{\sigma_0}{\sigma_1} \Downarrow_P \Config{\Skip}{\sigma_0'}{\sigma_1'}
$
and
$
(e_i,\sigma_0'\cup\sigma_1') \toExpr n
$.
It follows from Lemma~\ref{lem:correctness} and $\Rule{C_i[\var{u}_i]}{C_i[\symb{return}(e_i)]} \in \cR$ (where $C_i[\,]$ is a context defined in Definition~\ref{def:converter}) that 
\[
\begin{array}{@{}l@{\>}l@{}}
\symb{env}((\Xvec)\sigma_0,\symb{stack}(\symb{f}_i((\Yvec[i])\sigma_1),s)) 
&
\mathrel{\to^*_{\cR}}
 \symb{env}((\Xvec)\sigma_0',\symb{stack}(u_i\sigma_1',s)) \\
&
\mathrel{\to_{\cR}} 
\symb{env}((\Xvec)\sigma_0',\symb{stack}(\symb{return}(e_i(\sigma_0'\cup\sigma_1')),s)).
\\
\end{array}
\]
It follows from Lemma~\ref{lem:calc} that $e_i(\sigma_0'\cup\sigma_1') \mathrel{\to^*_{\cR}} n$, and thus 
\[
\symb{env}((\Xvec)\sigma_0',\symb{stack}(\symb{return}(e_i(\sigma_0'\cup\sigma_1')),s))
\mathrel{\to^*_{\cR}} \symb{env}((\Xvec)\sigma_0',\symb{stack}(\symb{return}(n),s)).
\]
Therefore, the \textit{only-if} part holds.

Next, we show the \textit{if} part.
Assume that 
\[
\begin{array}{@{}l@{\>}l@{}}
\symb{env}((\Xvec)\sigma_0,\symb{stack}(\symb{f}_i((\Yvec[i])\sigma_1),s)) 
&
\mathrel{\to^*_{\cR}} 
\symb{env}((\Xvec)\sigma_0',\symb{stack}(u_i\sigma_1',s)) \\
&
\mathrel{\to^*_{\cR}} 
\symb{env}((\Xvec)\sigma_0',\symb{stack}(\symb{return}(e_i(\sigma_0'\cup\sigma_1')),s)) \\
&
\mathrel{\to^*_{\cR}} 
\symb{env}((\Xvec)\sigma_0',\symb{stack}(\symb{return}(n),s)).
\end{array}
\]
It follows from Lemmas~\ref{lem:correctness} and~\ref{lem:calc} that
$
\Config{\alpha_i}{\sigma_0}{\sigma_1} \Downarrow_P \Config{\Skip}{\sigma_0'}{\sigma_1'}
$
and
$
(e_i,\sigma_0'\cup\sigma_1') \toExpr n
$.
Therefore, the \textit{if} part holds.
\qed
\end{proof}
Theorem~\ref{thm:correctness} implies that the execution of $\FCall{\symb{f}_i}{\Yvec[i]}$ with $\sigma_0,\sigma_1$ does not halt if and only if the reduction from $(\symb{env}(\Xvec,\symb{stack}(\symb{f}_i(\Yvec[i]),s)))(\sigma_0\cup\sigma_1)$ does not terminate.
This is because by the semantics, the execution of a program never halts unsuccessfully and either successfully halts or does not halt.


\section{Conclusion}
\label{sec:conclusion}

In this paper, we proposed a new transformation of imperative programs with function calls and global variables into LCTRSs, and proved correctness of the transformation.
A direction of future work is to apply the new transformation to a sequential program and its parallelized version in order to prove their equivalence.
To simplify the discussion, we considered a program executed as a single process, i.e., executed \emph{sequentially}, and the introduced symbol $\symb{env}$ has an argument that is used for the single process (see $\cRsumrecstack$ again). 
To adapt to \emph{parallel} execution where the number of executed processes is fixed, it suffices to add arguments for all executed processes into the symbol $\symb{env}$.
We will formalize this idea and prove the correctness of the transformation for parallel execution.

\paragraph{Acknowledgements}
We gratefully acknowledge the anonymous reviewers for their useful comments and suggestions to improve the paper.



\end{document}